\documentclass[12pt]{amsart}
\usepackage{amsmath,amssymb,graphicx,a4wide}

 \newtheorem{thm}{Theorem}[section]

 \newtheorem{prop}[thm]{Proposition}
 \theoremstyle{definition}
 \newtheorem{defn}[thm]{Definition}
 \theoremstyle{remark}
 \newtheorem{rem}[thm]{Remark}
 
 \numberwithin{equation}{section}

\newcommand{\ts}{\hspace{0.5pt}}
\newcommand{\nts}{\hspace{-0.5pt}}
\newcommand{\dd}{\,\mathrm{d}}

\begin{document}

\title[Continuous Diffraction]
 {Non-Periodic Systems with \\[2mm] Continuous Diffraction Measures}

\author[Baake]{Michael Baake}

\address{%
Fakult\"{a}t f\"{u}r Mathematik,
Universit\"{a}t Bielefeld,
Postfach 100131,\newline
\indent 33501 Bielefeld,
Germany}
\email{mbaake@math.uni-bielefeld.de}

\author[Birkner]{Matthias Birkner}
\address{%
Institut f\"{u}r Mathematik,
Johannes-Gutenberg-Universit\"{a}t,
Staudingerweg 9, \newline
\indent
55099 Mainz,
Germany}
\email{birkner@mathematik.uni-mainz.de}

\author[Grimm]{Uwe Grimm}
\address{%
Department of Mathematics and Statistics,
The Open University,
Walton Hall, \newline
\indent Milton Keynes MK7 6AA,
United Kingdom}
\email{uwe.grimm@open.ac.uk}


\begin{abstract}
  The present state of mathematical diffraction theory for systems
  with continuous spectral components is reviewed and extended. We
  begin with a discussion of various characteristic examples with
  singular or absolutely continuous diffraction, and then continue
  with a more general exposition of a systematic approach via
  stationary stochastic point processes. Here, the intensity measure
  of the Palm measure takes the role of the autocorrelation measure in
  the traditional approach.  We furthermore introduce a `Palm-type'
  measure for general complex-valued random measures that are
  stationary and ergodic, and relate its intensity measure to the
  autocorrelation measure.
\end{abstract}

\maketitle

\section{Introduction}

The (mathematical or kinematic) diffraction theory of systems in
Euclidean space with pure point spectrum is rather well
understood. Ultimately, this is due to the availability of Poisson's
summation formula and its generalisations to the setting of measures
(or to tempered distributions); see \cite[Sec.~9.2]{BBG-TAO} for a
systematic exposition. Beyond results on the spectral nature, this
often also provides explicit formulas for the diffraction measure,
such as in the cases of lattice-periodic systems and model sets. For
these systems, there is also a well-understood connection with the
Halmos--von Neumann theorem for the corresponding pure point dynamical
spectrum; see \cite{BBG-LMS,BBG-BL,BBG-LStru,BBG-BLM,BBG-BLvE} for
details as well as \cite{BBG-TAO} and references therein for general
background.

As soon as one enters the realm of systems with continuous diffraction
spectra (or at least with continuous spectral components), the
situation changes drastically. As in the case of Schr\"{o}dinger
operator spectra \cite{BBG-DEG}, much less is known about the plethora
of possibilities, and there rarely are explicit formulas for the
diffraction measures of specific examples. Until recently, explicit
results were restricted to simple systems of Bernoulli type (hence
with disorder that leads to independent random variables) or to some
paradigmatic examples in one dimension (and product systems built from
them).

There has now been some progress towards explicitly computable
examples in various directions \cite{BBG-Kai,BBG-BBM10,BBG-BKM}. In
particular, both for singular and for absolutely continuous cases,
constructive approaches have been more successful than previously
anticipated; compare \cite[Ch.~10]{BBG-TAO}. Consequently, there is
some hope that more systems can be understood in this way. This view
is also supported by the recent progress in the understanding of the
connection between the dynamical and the diffraction spectrum in this
more general situation; see \cite{BBG-BLvE} and references therein.
At the same time, such examples will improve our intuition about
systems with continuous diffraction. Below, this will be reflected by
several short sketches of characteristic examples (which are covered
in more detail in \cite{BBG-TAO}), before we embark on a more
systematic setting via general point process theory. Our focus is on
systems in $\mathbb{R}^{d}$, which is the primary situation to
understand, particularly from the applications point of
view. Extensions to more general locally compact Abelian groups are
possible, but will not be discussed here.

\section{Diffraction Measures --- a Brief Reminder}

Let $\omega$ be a locally finite (and possibly complex) measure on
$\mathbb{R}^{d}$, which we primarily view as a linear functional on
the space $C_{\mathsf{c}} (\mathbb{R}^{d})$ of continuous functions
with compact support on $\mathbb{R}^{d}$, together with some mild
extra conditions. In favourable cases, $\omega$ will be translation
bounded.  By the classic Riesz--Markov representation theorem, we may
identify the measures defined by this approach with regular Borel
measures; for a systematic exposition, we refer to
\cite{BBG-Hof,BBG-BL} as well as \cite[Chs.~8 and 9]{BBG-TAO} and
references therein. Particularly important examples comprise the
\emph{Dirac measure} $\delta_{x}$, defined by $\delta_{x}(g) := g(x)$
for $g\in C_{\mathsf{c}} (\mathbb{R}^{d})$, and measures of the form
\begin{equation}\label{BBG-eq:comb}
    \delta^{}_{\nts S} \, := \sum_{x\in S} \delta^{}_{x}\ts ,
\end{equation}
which are known as \emph{Dirac combs}, where $S\subset\mathbb{R}^{d}$
is uniformly discrete. More generally, we will also consider objects
of the form $\sum_{x\in S} w(x)\ts \delta_{x}$, which can be a measure
for a general countable set $S$, then under suitable conditions on the
weight function $w$. Such measures are referred to as \emph{weighted}
Dirac combs.

Recall from \cite{BBG-Hof} or \cite{BBG-TAO} that, if $\omega$ is a
measure on $\mathbb{R}^{d}$, the (inverted-conjugate) measure
$\widetilde{\omega}$ is defined by $\widetilde{\omega} (g) :=
\overline{\omega (\widetilde{g})}$ for $g\in C_{\mathsf{c}}
(\mathbb{R}^{d})$, where $\widetilde{g} (x) := \overline{g(-x)}$.
Given a measure $\omega$, consider its \emph{autocorrelation measure}
\begin{equation}\label{BBG-eq:autodef}
     \gamma \, = \, \gamma^{}_{\omega}  := \,
     \omega \circledast \widetilde{\omega} \ts ,
\end{equation}
where $\circledast$ denotes the volume averaged (or Eberlein)
convolution. The latter is defined by
\[
     \omega \circledast \widetilde{\omega} \, := 
     \lim_{r\to\infty} \, \frac{\omega_{r} * \widetilde{\omega_{r}}}
     {\mathrm{vol} (B_{r} (0))}
\]
with $B_{r} (0)$ the (open) ball of radius $r$ around the origin and
$\omega_{r} := \omega|^{}_{B_{r} (0)}$. At this stage, we assume the
existence of the limit. This will be discussed in more detail later.

If (as in many of our examples) $\omega$ is a Dirac comb with
lattice support, also $\gamma$ will be supported on the same
lattice (or a subset of it). Concretely, if 
\[
   \omega \, = \, w \, \delta^{}_{\mathbb{Z}} \, := 
   \sum_{n\in\mathbb{Z}} w(n) \, \delta_{n} \ts ,
\]
with a bounded weight function $w$ say, one finds $\gamma = \eta \ts
\delta^{}_{\mathbb{Z}}$ with the positive definite function $\eta \! :
\, \mathbb{Z} \longrightarrow \mathbb{C}$ being defined by
\begin{equation}\label{BBG-eq:eta}
  \begin{split}
   \eta (m) \, :=&  \lim_{N\to\infty} \frac{1}{2N+1}
   \sum_{n=-N}^{N} w (n) \, \overline{w (n-m)} \\
   \, =&  \lim_{N\to\infty} \frac{1}{2N+1}
   \sum_{n=-N}^{N}\overline{ w (n)} \, w (n+m) \ts ,
  \end{split}
\end{equation}
provided that all limits exist. In our exposition below, this
existence will follow by suitable applications of Birkhoff's ergodic
theorem, applied to the dynamical system of the shift action on the
orbit closure of the sequence $w$ or to a similar type of dynamical
system; compare \cite{BBG-BL} for a more general setting. One benefit
of this approach will emerge via the Herglotz--Bochner theorem
\cite{BBG-Katz}.

The autocorrelation measure $\gamma$ is \emph{positive definite} (or
of positive type) by construction, which means that
$\gamma(g*\widetilde{g})\ge 0$ for all $g\in C_{\mathsf{c}}
(\mathbb{R}^{d})$. It is thus Fourier transformable \cite{BBG-BF}, and
the Fourier transform $\widehat{\gamma}$ is a positive measure, called
the \emph{diffraction measure} of $\omega$; see \cite{BBG-Cow} for the
physics behind this notion, and \cite{BBG-Hof} as well as
\cite[Ch.~9]{BBG-TAO} for the mathematical theory. Within the
framework of kinematic diffraction, it describes the outcome of a
scattering experiment by quantifying how much intensity is scattered
into a given volume of $d$-space, and thus is the central object of
our interest. By the Lebesgue decomposition theorem, there is a unique
splitting
\[
   \widehat{\gamma} \, = \, \widehat{\gamma}_{\textsf{pp}} + 
   \widehat{\gamma}_{\textsf{sc}} + \widehat{\gamma}_{\textsf{ac}}
\]
of the diffraction measure into its \emph{pure point} part
$\widehat{\gamma}_{\textsf{pp}}$, its \emph{singular continuous} part
$\widehat{\gamma}_{\textsf{sc}}$ and its \emph{absolutely continuous}
part $ \widehat{\gamma}_{\textsf{ac}}$, with respect to Lebesgue
measure $\lambda$. The pure point part comprises the `Bragg peaks' (of
which there are at most countably many, so
$\widehat{\gamma}_{\textsf{pp}}$ is a sum over at most countably many
Dirac measures with positive weights), while the absolutely continuous
part corresponds to the diffuse `background' scattering which is given
by a locally integrable density relative to $\lambda$. The singular
continuous part is whatever remains --- if present, it is a measure
that gives no weight to single points, but is still concentrated to an
(uncountable) set of zero Lebesgue measure.

Measures $\omega$ which lead to a diffraction $\widehat{\gamma} \, =
\, \widehat{\gamma}_{\textsf{pp}}$ are called \emph{pure point
  diffractive}; examples include lattice-periodic measures and
measures based on model sets. These have been studied in detail in the
context of diffraction of crystals and quasicrystals; see
\cite{BBG-BG11} for a recent review and \cite[Chs.~8 and 9]{BBG-TAO}
for a systematic exposition. Here, we are concentrating on the other
two spectral components, which may also carry important information on
the (partial) order which is present in the underlying structure. Pure
point spectra are discussed in detail in
\cite{BBG-Crelle,BBG-BL,BBG-BL2,BBG-BLM,BBG-TAO,BBG-LR07,BBG-LM09,BBG-LMpre};
for related spectral problems in the context of Schr\"{o}dinger
operators, we refer to \cite{BBG-DEG}.

\section{Guiding Examples}

As mentioned above, the understanding of systems with continuous
diffraction components is less developed than that of pure point
diffractive ones. Still, a better intuition will emerge from a sample
of characteristic examples. It is the purpose of this section to
provide some of them, while we refer to the literature for further
ones \cite{BBG-BH,BBG-BM98,BBG-BBM10,BBG-BG11,BBG-BKM,BBG-TAO}.

\subsection{Thue--Morse Sequences}

Let us begin with a classic example from the theory of substitution
systems that leads to a singular continuous diffraction measure with
rather different features in comparison with the Cantor measure, the
latter being illustrated in Figure~\ref{BBG-fig:cantor}. Our example
has a long history, which can be extracted from
\cite{BBG-Wie27,BBG-Mah27,BBG-K,BBG-AS}. We confine ourselves to a
brief summary of the results, and refer to \cite{BBG-BG08,BBG-TAO} and
references therein for proofs and details.

\begin{figure}
\begin{center}
\includegraphics[width=0.9\textwidth]{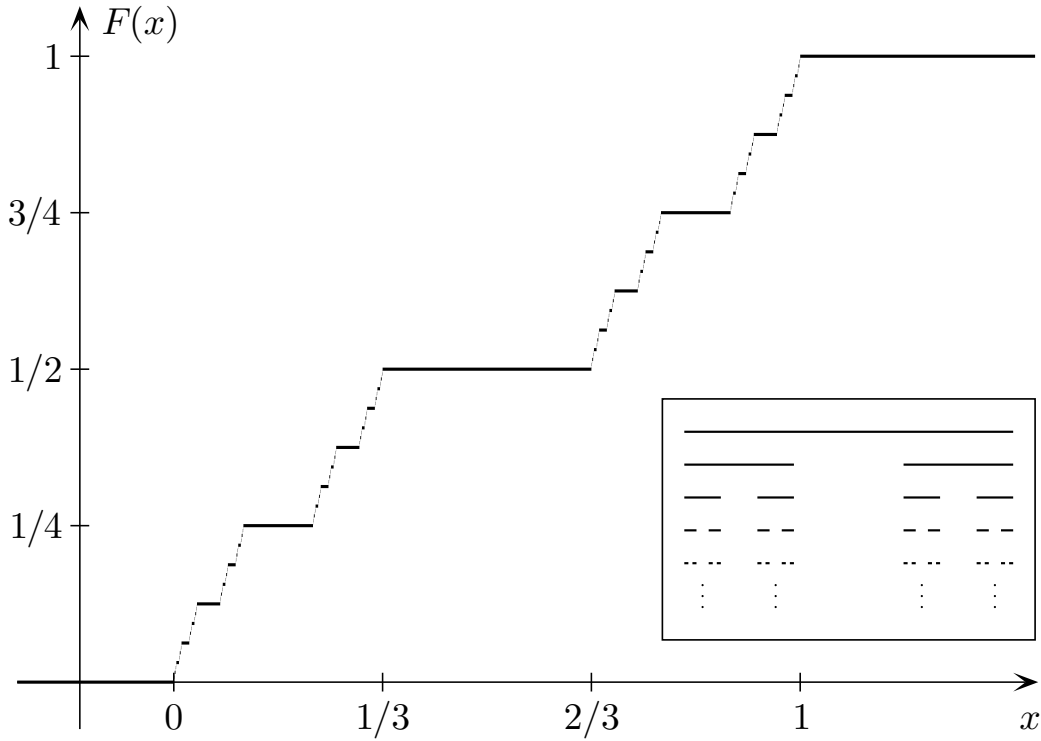}
\end{center}
\caption{\label{BBG-fig:cantor}The distribution function $F$ of the
  classic middle-thirds Cantor measure. The construction of the
  underlying Cantor set is sketched in the inset.}
\end{figure}

The classic \emph{Thue--Morse} (TM) sequence can be defined via the
one-sided fixed point $v=v^{}_{0}v^{}_{1}v^{}_{2}\ldots$ (with
$v^{}_{0}=1$) of the primitive substitution rule
\[
   \varrho\! : \, 
      \begin{array}{c} 1 \mapsto 1 \bar{1} \\ 
      \bar{1} \mapsto \bar{1} 1 \end{array}
\]
on the binary alphabet $\{1,\bar{1}\}$. The fixed point is the limit
(in the obvious product topology) of the (suitably embedded) iteration
sequence
\[
  1 \stackrel{\varrho}{\longmapsto} 
  1\bar{1}  \stackrel{\varrho}{\longmapsto} 1\bar{1}\bar{1}1
  \stackrel{\varrho}{\longmapsto}  1\bar{1}\bar{1}1 \bar{1}11\bar{1} 
   \stackrel{\varrho}{\longmapsto} \ldots 
  \longrightarrow v = \varrho (v) = v^{}_{0} v^{}_{1} v^{}_{2} 
  v^{}_{3} \ldots
\]
and has a number of distinctive properties \cite{BBG-AS,BBG-Q}, for
instance
\begin{itemize}
\item   $v^{}_{i} = (-1)^{\text{sum of the binary digits of $i$}}$
\item  $v^{}_{2i} \vphantom{\bar{v}} = v^{}_{i}$ and 
        $v^{}_{2i+1} = \overline{v^{}_{i}}$, for all $i\in\mathbb{N}^{}_{0}$;
\item   $v = v^{}_{0} v^{}_{2} v^{}_{4}\ldots$ and
        $\bar{v} = v^{}_{1} v^{}_{3} v^{}_{5}\ldots$
\item   $v$ is (strongly) cube-free (and hence non-periodic).
\end{itemize}
Here, we define $\bar{\bar{1}}=1$ and identify $\bar{1}$ with $-1$,
also for the later calculations with Dirac combs.  A two-sided
sequence $w$ can be defined by
\[  
  w (i) \,  = \, \begin{cases}
         v_{i} , & \text{for $i\ge 0$,} \\ 
         v_{-i-1}, & \text{for $i<0$,} \end{cases}
\]
which is a fixed point of $\varrho^{2}$, because the seed
$w^{}_{-1}|w^{}_{0} = 1|1$ is a legal word (it occurs in
$\varrho^{3}(1)$) and $w=\varrho^{2}(w)$. The (discrete) hull
$\mathbb{X}=\mathbb{X}^{}_{\mathrm{TM}}$ of the TM substitution is the
closure of the orbit of $w$ under the shift action, which is a subset
of $\{\pm 1\}^{\mathbb{Z}}$ and hence a compact space. The orbit of
any of its members is dense in $\mathbb{X}$.  We thus have a
topological dynamical system $(\mathbb{X}, \mathbb{Z})$ that is
minimal. When equipped with the standard Borel $\sigma$-algebra, the
system admits a unique shift-invariant probability measure $\nu$, so
that the corresponding measure theoretic dynamical system
$(\mathbb{X},\mathbb{Z},\nu)$ is strictly ergodic \cite{BBG-K,BBG-Q}.

Any given $w\in\mathbb{X}$ is mapped to a signed Dirac comb $\omega$
via
\[
    \omega  \, = \sum_{n\in\mathbb{Z}} w (n) \,\delta_{n}\ts .
\] 
The image of $\mathbb{X}$ is a space of translation 
bounded measures that is compact
in the vague topology.  We inherit strict ergodicity via conjugacy,
and thus obtain an autocorrelation of the form of
Eq.~\eqref{BBG-eq:autodef} with coefficients $\eta(m)$ as in
Eq.~\eqref{BBG-eq:eta}. In fact, this autocorrelation does not depend
on the choice of the element from $\mathbb{X}$, so that we may choose
the fixed point $w$ from above for the concrete analysis. Due to the
nature of $w$,  the coefficients can alternatively be expressed
as
\[
     \eta(m)\, = \lim_{N\to\infty}\,
    \frac{1}{N}\, \sum_{n=0}^{N-1} v_n \, v_{n+m}
\]
for $m\ge 0$, together with $\eta(-m)=\eta(m)$. It is clear that
$\eta(0)=1$, and the scaling relations of $v$ lead to the
recursions \cite{BBG-K}
\begin{equation}\label{BBG-eq:tmrec}
\begin{split}
    \eta(2m) & \, =\,  \eta(m)\qquad \text{and} \\
    \eta(2m\!+\!1) & \, =\,  
    -\tfrac{1}{2} \bigl( \eta(m) + \eta(m\!+\!1)\bigr),
\end{split}
\end{equation}
which are valid for all $m\in\mathbb{Z}$. In particular, the second
relation, used with $m=0$, implies $\eta(1) = -\tfrac{1}{3}$, which
can also be calculated directly.

Since $\eta \! : \, \mathbb{Z} \longrightarrow \mathbb{C}$ is a
positive definite function with $\eta(0)=1$, there is a unique
probability measure $\mu$ on the unit circle (which we identify with
the unit interval here) such that
\begin{equation}\label{BBG-eq:HB}
     \eta (m) \, =  \int_{0}^{1} \mathrm{e}^{2 \pi \mathrm{i} m y} 
         \, \mathrm{d} \mu (y) \, ,
\end{equation}
which is a consequence of the Herglotz--Bochner theorem
\cite[Thm.~I.7.6]{BBG-Katz}. Since $\omega$ is supported on
$\mathbb{Z}$, the corresponding diffraction measure $\widehat{\gamma}$
is $1$-periodic, which follows from \cite[Thm.~1]{BBG-B02}; see also
\cite[Sec.~10.3.2]{BBG-TAO}. One then finds the relation
\[
    \widehat{\gamma} \, = \, \mu * \delta^{}_{\mathbb{Z}} 
\]
with the measure $\mu$ from Eq.~\eqref{BBG-eq:HB}, appropriately
interpreted as a measure on $[0,1)$ and hence also on
$\mathbb{R}$. Clearly, one also has $\mu =
\widehat{\gamma}|^{}_{[0,1)}$.  One can now analyse the spectral type
of $\widehat{\gamma}$ via that of the finite measure $\mu$, where we
follow \cite{BBG-K}; see also \cite{BBG-Q,BBG-BLvE}.

Defining $\varSigma(N) = \sum_{m=-N}^{N} \bigl(\eta(m)\bigr)^2$, a
two-step calculation with the recursion \eqref{BBG-eq:tmrec}
establishes the inequality $\varSigma (4N) \le \frac{3}{2} \varSigma
(2 N)$ for all $N\in\mathbb{N}$.  This implies $\lim_{N\to\infty}
\varSigma(N)/N=0$, wherefore Wiener's criterion
\cite{BBG-Wie27,BBG-Katz}, see also \cite[Prop.~8.9]{BBG-TAO}, tells
us that $\mu$ is a continuous measure, so that $\widehat{\gamma}$
cannot have any pure point component. Note that the absence of the
`trivial' pure point component of $\widehat{\gamma}$ on $\mathbb{Z}$
is due to the use of balanced weights, in the sense that $1$ and $-1$
are equally frequent. Consequently, the average weight is zero, and
the claim follows from \cite[Prop.~9.2]{BBG-TAO}.

Let us now define the distribution function $F$ by $F(x) = \mu \bigl(
[0,x] \bigr)$ for $x\in[0,1]$, which is a continuous function that
defines a Riemann--Stieltjes measure \cite[Ch.~X]{BBG-Lang}, so that
$\mathrm{d} F=\mu$. The recursion relation for $\eta$ now implies
\cite{BBG-K} the two functional relations
\[
        \mathrm{d} F \bigl( \tfrac{x}{2} \bigr) \pm
            \mathrm{d} F \bigl( \tfrac{x+1}{2} \bigr) \, = \,
        \left\{\begin{smallmatrix}1 \\
         -\cos(\pi x)\end{smallmatrix}\right\} \,
          \mathrm{d} F (x) \, ,
\]
which have to be satisfied by the $\textsf{ac}$ and $\textsf{sc}$
parts of $F$ separately, because we have $\mu^{}_{\textsf{ac}} \perp
\mu^{}_{\textsf{sc}}$ in the measure-theoretic sense; see
\cite[Thm.~I.20]{BBG-RS} or \cite[Thm.~VII.2.4]{BBG-Lang}. Therefore,
defining
\[ 
   \eta_\mathsf{ac} (m)\, = \int_{0}^{1} 
    \mathrm{e}^{2 \pi \mathrm{i} m x}\, \mathrm{d}F_{\textsf{ac}}(x)\, ,
\]
we know that the coefficients $\eta^{}_{\mathsf{ac}}(m)$ must satisfy
the same recursions \eqref{BBG-eq:tmrec} as $\eta(m)$, possibly with a
different initial condition $\eta^{}_{\mathsf{ac}}(0)$. The classic
Riemann--Lebesgue lemma \cite[Thm.I.2.8]{BBG-Katz} states that
$\lim_{m\to\pm\infty} \eta_\mathsf{ac} (m) = 0$. But this limit is
only compatible with $\eta_\mathsf{ac}(0)=0$, because
$\eta_\mathsf{ac}(1)=-\frac{1}{3}\eta_\mathsf{ac}(0)$ and
$\eta_\mathsf{ac}(2m)=\eta_\mathsf{ac}(m)$ for all $m\in\mathbb{N}$,
so that we must have $\eta_\mathsf{ac}\equiv 0$. This means
$F_{\mathsf{ac}}=0$ by the Fourier uniqueness theorem, wherefore $\mu$
and hence $\widehat{\gamma}$ (neither of which is the zero measure)
are purely singular continuous. The resulting distribution function
$F$ is illustrated in Figure~\ref{BBG-fig:tm}. Note that $F$ can
consistently be extended to a continuous function on $\mathbb{R}$ via
$F(x+n) = F(x) + n$ for $n\in\mathbb{Z}$ and then defines
$\widehat{\gamma}$ via $\dd F = \widehat{\gamma}$ in the
Lebesgue--Stieltjes sense. The function $F$ can efficiently be
calculated by means of the uniformly converging Volterra iteration
\begin{equation}\label{BBG-eq:Volt}
     F^{}_{n+1} (x) = \frac{1}{2} \int_{0}^{2x}
     \bigl( 1 - \cos (\pi y)\bigr) F^{\,\prime}_{n} (y)\,
     \mathrm{d} y  
\end{equation}
with $F^{}_{0} (x) = x$. In contrast to the Devil's staircase of
Figure~\ref{BBG-fig:cantor}, the TM distribution function is
\emph{strictly} increasing, which means that there is no plateau
(which would indicate a gap in the support of $\widehat{\gamma}$); see
\cite{BBG-BG08,BBG-TAO} and references therein for details and further
properties of $F$. So far, we have obtained the following result.

\begin{figure}
\begin{center}
\includegraphics[width=0.77\textwidth]{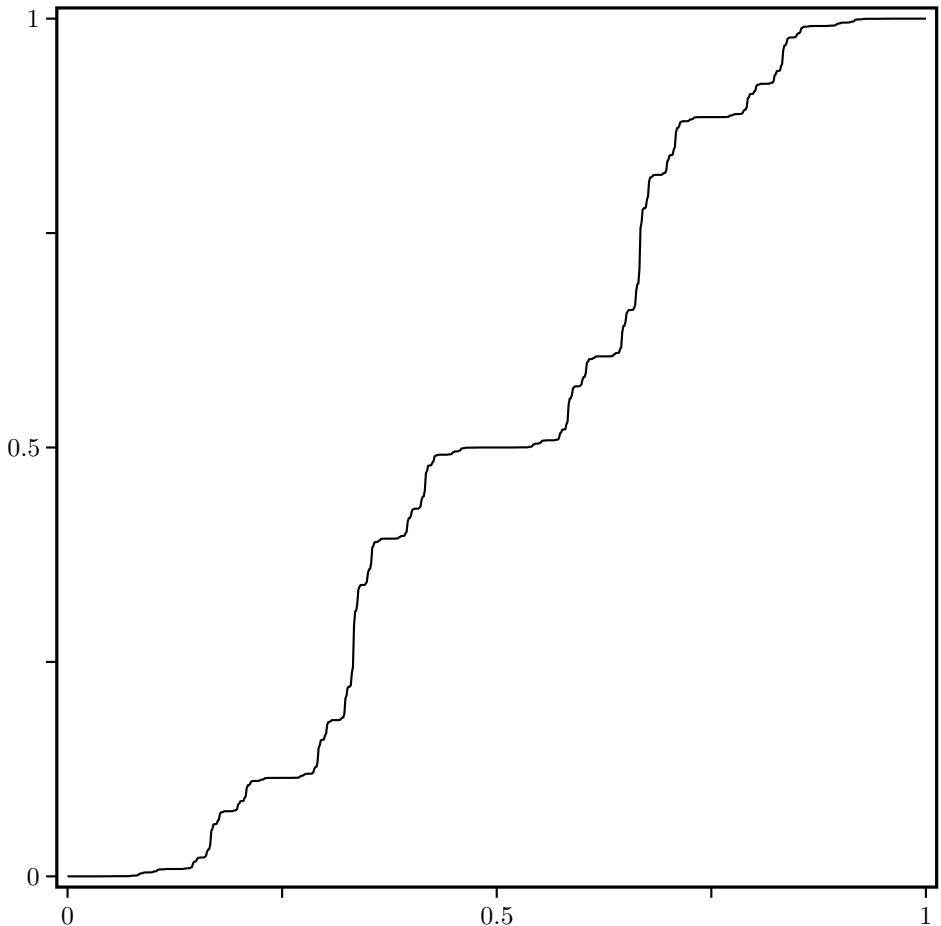}
\end{center}
\caption{\label{BBG-fig:tm}The strictly increasing distribution
  function of the classic, purely singular continuous TM measure on
  $[0,1]$.}
\end{figure}

\begin{thm}\label{BBG-thm:TM}
  Let\/ $w$ be any element of the Thue--Morse hull\/
  $\mathbb{X}=\mathbb{X}^{}_{\mathrm{TM}}$, the latter represented as
  a closed subshift of\/ $\{\pm 1 \}^{\mathbb{Z}}$, and consider the
  corresponding Dirac comb\/ $w \hspace*{1pt} \delta^{}_{\mathbb{Z}}$.
  Then, its autocorrelation\/ $\gamma$ exists and is given by\/
  $\gamma = \eta \hspace*{1pt} \delta^{}_{\mathbb{Z}}$ with\/ $\eta$
  being defined by Eq.~\eqref{BBG-eq:tmrec} together with the initial
  condition\/ $\eta(0) = 1$.
 
  The diffraction measure is\/ $\widehat{\gamma} = \mu *
  \delta^{}_{\mathbb{Z}}$, where\/ $\mu$ is the purely
  singular continuous probability measure from Eq.~\eqref{BBG-eq:HB}.
  In particular,  $\widehat{\gamma}$ is purely singular continuous 
  as well.      \qed
\end{thm}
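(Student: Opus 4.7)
The plan is to assemble the ingredients already prepared above into a coherent argument in three stages: existence of $\gamma$ together with the recursion for $\eta$, identification of $\widehat{\gamma}$ via Herglotz--Bochner, and spectral classification of the resulting measure $\mu$ via Wiener's criterion and the Riemann--Lebesgue lemma. For existence, strict ergodicity of $(\mathbb{X},\mathbb{Z},\nu)$ lets me apply Birkhoff's ergodic theorem to the continuous cylinder function $w\mapsto w(0)\,w(m)$, yielding uniform convergence of the Birkhoff averages in $w$; hence the limit defining $\eta(m)$ in \eqref{BBG-eq:eta} exists for every $w\in\mathbb{X}$ and is independent of the chosen representative, and it may be evaluated on the fixed point $v$. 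Using $v_{2i}=v_i$ and $v_{2i+1}=-v_i$, splitting $\frac{1}{N}\sum_{n=0}^{N-1} v_n v_{n+m}$ according to the parity of $n$ and discarding boundary terms of order $1/N$ then produces the recursion \eqref{BBG-eq:tmrec}; the normalisation $\eta(0)=1$ is immediate, and $\gamma=\eta\,\delta^{}_{\mathbb{Z}}$ follows from the integer support of $\omega$.

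Next, positive definiteness of $\gamma$ translates into positive definiteness of $\eta$ on $\mathbb{Z}$, so the Herglotz--Bochner theorem produces the unique probability measure $\mu$ on $[0,1)$ whose Fourier coefficients are the $\eta(m)$, as in \eqref{BBG-eq:HB}. Because $\gamma$ is supported on $\mathbb{Z}$, the cited periodisation result \cite{BBG-B02} yields $\widehat{\gamma}=\mu*\delta^{}_{\mathbb{Z}}$ with $\mu=\widehat{\gamma}|^{}_{[0,1)}$. The first two assertions of the theorem are thus in hand, and it only remains to show that $\mu$ is purely singular continuous.

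To exclude a pure point component, I would form $\varSigma(N):=\sum_{|m|\le N}\eta(m)^2$ and, inserting \eqref{BBG-eq:tmrec} twice on consecutive doublings and absorbing the resulting mixed terms $\eta(m)\eta(m+1)$ via the arithmetic--geometric inequality, derive the key estimate $\varSigma(4N)\le \tfrac{3}{2}\,\varSigma(2N)$. Iterating along powers of two gives $\varSigma(N)/N\to 0$, so Wiener's criterion rules out atoms of $\mu$; combined with the balanced-weight observation already noted, this removes every pure point contribution from $\widehat{\gamma}$.

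For the absolutely continuous part, I would Lebesgue-decompose $F=F_{\mathsf{ac}}+F_{\mathsf{sc}}$; since the two pieces are mutually singular, the functional equations for $\mathrm{d}F$ derived from \eqref{BBG-eq:tmrec} must hold for each separately, so $\eta_{\mathsf{ac}}(m)$ also satisfies \eqref{BBG-eq:tmrec}. The relation $\eta_{\mathsf{ac}}(2m)=\eta_{\mathsf{ac}}(m)$ then keeps $|\eta_{\mathsf{ac}}(2^k)|$ constant in $k$, which contradicts the Riemann--Lebesgue lemma unless $\eta_{\mathsf{ac}}\equiv 0$; Fourier uniqueness gives $F_{\mathsf{ac}}=0$, and together with the previous step this shows that $\mu$, and hence $\widehat{\gamma}=\mu*\delta^{}_{\mathbb{Z}}$, is purely singular continuous. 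The main technical hurdle is the two-step Wiener estimate: the cross terms generated by the odd part of \eqref{BBG-eq:tmrec} have to be absorbed with proportionality constant strictly below $2$ (so that $\varSigma$ grows sublinearly), and tracking this bookkeeping correctly is where the work lies; strict ergodicity, Herglotz--Bochner and Riemann--Lebesgue then do the rest.
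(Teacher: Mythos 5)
Your proposal reproduces the paper's argument step for step: unique (strict) ergodicity plus Birkhoff for the existence and representative-independence of $\eta$, the scaling relations $v^{}_{2i}=v^{}_{i}$, $v^{}_{2i+1}=-v^{}_{i}$ for the recursion \eqref{BBG-eq:tmrec}, Herglotz--Bochner and the periodisation result of \cite{BBG-B02} for $\widehat{\gamma}=\mu*\delta^{}_{\mathbb{Z}}$, the two-step estimate $\varSigma(4N)\le\frac{3}{2}\varSigma(2N)$ with Wiener's criterion to exclude atoms, and the mutual singularity of the Lebesgue components together with $\eta^{}_{\mathsf{ac}}(2m)=\eta^{}_{\mathsf{ac}}(m)$ versus Riemann--Lebesgue to kill the absolutely continuous part. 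This is essentially the same approach as the paper, and it is correct.
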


To go one step further, Eq.~\eqref{BBG-eq:Volt} defines an iteration
sequence of distribution functions for absolutely continuous measures
that converges towards the TM measure in the vague topology. Writing
$\,\mathrm{d} F_{n} (x) = f_{n} (x) \, \mathrm{d} x$, one finds
\[
      f_{n} (x) \, = \, \prod_{m=0}^{n-1} 
          \bigl( 1 - \cos (2^{m+1} \pi x)\bigr) ,
\]
which, in the vague limit as $n\to\infty$, gives the well-known Riesz
product representation of the TM measure; compare \cite{BBG-Q} for details
and \cite{BBG-Z} for general background on Riesz products.

The TM sequence is closely related to the limit-periodic \emph{period
  doubling} (pd) sequence, compare \cite{BBG-BGG12,BBG-TAO} and
references therein, via the (continuous) sliding block map defined
by
\begin{equation} \label{BBG-eq:block-map}
  \phi \nts : \quad 1\bar{1} ,  \bar{1}1 \mapsto a
  \, , \quad 11  ,  \bar{1}\bar{1} \mapsto b \, ,
\end{equation}
which results in an exact 2-to-1 surjection from the hull
$\mathbb{X}^{}_{\mathrm{TM}}$ to $\mathbb{X}^{}_{\mathrm{pd}}$. The
latter is the hull of the period doubling substitution defined by
\begin{equation}\label{BBG-eq:pd-def}
   \varrho^{}_{\mathrm{pd}} \! : \quad
   a \mapsto ab \, , \quad b \mapsto aa\, .
\end{equation}
Viewed as topological dynamical systems, this means that
$(\mathbb{X}^{}_{\mathrm{pd}},\mathbb{Z})$ is a factor of
$(\mathbb{X}^{}_{\mathrm{TM}},\mathbb{Z})$. Since both are strictly
ergodic, this extends to the corresponding measure-theoretic dynamical
systems.  The period doubling sequence can be described as a regular
model set with a $2$-adic internal space \cite{BBG-BMS,BBG-Crelle} and
is thus pure point diffractive. This pairing also explains a
phenomenon observed in \cite{BBG-EM}, namely that the \emph{dynamical}
spectrum of the TM system is richer than its diffraction spectrum. By
the dynamical (or von Neumann) spectrum, we mean the spectrum of the
unitary operator induced by the shift on the Hilbert space $L^2
(\mathbb{X}, \nu)$, where $\nu$ is the unique shift-invariant
probability measure on $\mathbb{X}$; see \cite{BBG-Q} for more. Here,
the pure point part of the dynamical spectrum is the ring $\mathbb{Z}
[\frac{1}{2}]$, which is not even finitely generated (and only the
`trivial' part $\mathbb{Z}$ is detected by the diffraction measure of
the TM system with general weights). In fact, our above measure $\mu$
from Theorem~\ref{BBG-thm:TM} represents the maximal spectral measure
in the ortho-complement of the pure point sector
\cite{BBG-Q,BBG-BLvE}. The missing pure point part, however, is fully
recovered via the \emph{diffraction} measure of
$\mathbb{X}_{\mathrm{pd}}$; see \cite{BBG-TAO} for details and
\cite{BBG-BLvE} for a general discussion of this phenomenon.

Various generalisations of this result are known by now. First of all,
and perhaps not surprisiningly, this generalises to an entire family
of bijective, binary substitutions \cite{BBG-BGG12}. Moreover,
extensions to higher dimensions are also possible, including the
explicit nature of the resulting diffraction measure; compare
\cite{BBG-F05,BBG-squiral} and references therein.

\subsection{Rudin--Shapiro Sequence}

The (binary) Rudin--Shapiro (RS) chain is a bi-infinite deterministic
sequence, with polynomial (in fact linear) complexity function and
thus zero entropy. It can be described recursively as $w =
(w(n))^{}_{n\in\mathbb{Z}}$ with $w(n)\in\{\pm 1\}$, with initial
conditions $w(-1)=-1$, $w(0)=1$ and the recursion
\begin{equation}\label{BBG-eq:rs}
   w(4n+\ell) \, = \,
    \begin{cases} w(n),  & \mbox{for $\,\ell\in\{0,1\}$,} \\
          (-1)^{n+\ell}\,w(n), & \mbox{for $\,\ell\in\{2,3\}$,}
     \end{cases}
\end{equation}
which determines $w(n)$ for all $n\in\mathbb{Z}$. The orbit closure of
$w$ under the shift action is the (discrete) RS hull
$\mathbb{X}^{}_{\mathrm{RS}}$. Alternatively, one can start from a
primitive substitution on a $4$-letter alphabet (via $a\mapsto ac$,
$b\mapsto dc$, $c\mapsto ab$ and $d\mapsto db$) and define a
quaternary hull, which then maps to the binary hull via a simple
reduction to two letters (for instance via $a,c\mapsto 1$ and
$b,d\mapsto -1$); compare \cite{BBG-AS,BBG-Q} or
\cite[Sec.~4.7.1]{BBG-TAO} for details. The two hulls define
topologically conjugate dynamical systems, with local derivation rules
in both directions; see \cite[Rem.~4.11]{BBG-TAO}.

The shift action on $\mathbb{X}^{}_{\mathrm{RS}}$ is strictly ergodic,
so that one can define functions $\eta, \vartheta \! : \,
\mathbb{Z} \longrightarrow \mathbb{C}$ via
\[
\begin{split}
    \eta (m) &\, = \, \lim_{N\to\infty} \frac{1}{2N+1}
        \sum_{n=-N}^{N} w(n) \, w(n-m) \quad \text{and} \\
    \vartheta (m) &\, = \, \lim_{N\to\infty} \frac{1}{2N+1}
        \sum_{n=-N}^{N} (-1)^{n} w(n) \, w(n-m) \ts ,
\end{split}
\]
where all limits exist due to unique ergodicity (which is best
formulated on the level of the $4$-letter alphabet mentioned
above). In particular, one finds $\eta (0) = 1$ and $\vartheta (0) =
0$. The recursive structure of Eq.~\eqref{BBG-eq:rs} now implies the
validity of a closed set of recursive equations
\cite{BBG-BG09,BBG-BG11}, namely
\[
   \begin{split}
   \eta(4m)& \,=\, \tfrac{1+(-1)^m}{2}\,\eta(m) ,\\[0.5ex]
   \eta(4m\!+\!1) &\,=\, \tfrac{1-(-1)^m}{4}\,\eta(m) +
   \tfrac{(-1)^m}{4}\,\vartheta(m) - \tfrac{1}{4}\,
   \vartheta(m\!+\!1) , \\[1.3ex]
   \eta(4m\!+\!2) &\,=\, 0,\\[0.5ex]
   \eta(4m\!+\!3) &\,=\, \tfrac{1+(-1)^m}{4}\,\eta(m\!+\!1) -
   \tfrac{(-1)^m}{4}\,\vartheta(m)
                 + \tfrac{1}{4}\vartheta(m\!+\!1),
   \end{split}
\]
together with
\[
   \begin{split}
   \vartheta(4m) &\,=\, 0,\\[0.5ex]
   \vartheta(4m\!+\!1) &\,=\, \tfrac{1-(-1)^m}{4}\,\eta(m) -
   \tfrac{(-1)^m}{4}\,\vartheta(m) +\tfrac{1}{4}\,
          \vartheta(m\!+\!1) , \\[0.5ex]
   \vartheta(4m\!+\!2)&\,=\, \tfrac{(-1)^m}{2}\,\vartheta(m)+
          \tfrac{1}{2}\,
   \vartheta(m\!+\!1) , \\[0.5ex]
   \vartheta(4m\!+\!3) &\,=\, -\tfrac{1+(-1)^m}{4}\,\eta(m\!+\!1) -
   \tfrac{(-1)^m}{4}\,\vartheta(m)
        + \tfrac{1}{4}\,\vartheta(m\!+\!1) .
   \end{split}
\]
which hold for all $m\in\mathbb{Z}$; see \cite[Sec.~10.2]{BBG-TAO} for
details. A careful inspection shows that the unique solution of this
set of equations, with the initial conditions mentioned above, is
$\eta(m) = \delta^{}_{m,0}$ together with $\vartheta(m) = 0$ for all
$m\in\mathbb{Z}$. Hence, despite the deterministic nature of the RS
sequence, the autocorrelation measure is simply given by
$\gamma^{}_{\mathrm{RS}} = \delta^{}_{0}$, so that $\widehat{
  \gamma^{}_{\mathrm{RS}}} = \lambda$, where $\lambda$ again denotes
Lebesgue measure. Alternatively, the result also follows from the
exposition in \cite{BBG-Q,BBG-PF}.

\begin{thm}
  Let\/ $w$ be any element of the Rudin--Shapiro hull\/
  $\mathbb{X}^{}_{\mathrm{RS}} \subset \{\pm 1 \}^{\mathbb{Z}}$, and
  consider the corresponding Dirac comb\/ $w \ts
  \delta^{}_{\mathbb{Z}}$. Then, its autocorrelation exists and is
  given by\/ $\gamma^{}_{\mathrm{RS}} = \delta^{}_{0}$, with
  diffraction measure\/ $\widehat{\gamma^{}_{\mathrm{RS}}} = \lambda$.
  \qed
\end{thm}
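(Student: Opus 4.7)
The plan is to reduce the theorem to verifying that the values $\eta(m)=\delta^{}_{m,0}$ and $\vartheta(m)\equiv 0$ constitute the unique solution of the displayed eight-line recursion system subject to the initial conditions $\eta(0)=1$ and $\vartheta(0)=0$, and then to Fourier-transform the resulting autocorrelation. Existence of $\eta$ and $\vartheta$, and hence of $\gamma=\eta\,\delta^{}_{\mathbb{Z}}$, is granted by the strict ergodicity of $(\mathbb{X}^{}_{\mathrm{RS}},\mathbb{Z})$ recorded in the excerpt, together with the observation that the parity of each site is encoded by the quaternary letter sitting at it, so that $(-1)^{n}$ also arises from a continuous observable on the hull and unique ergodicity delivers the limits defining $\vartheta(m)$ in the same way as for $\eta(m)$. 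The initial values read off as $\eta(0)=1$ and $\vartheta(0)=\lim_{N\to\infty}(2N+1)^{-1}\sum_{n=-N}^{N}(-1)^{n}=0$.

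A direct line-by-line substitution confirms that $\eta(m)=\delta^{}_{m,0}$ and $\vartheta(m)=0$ satisfy each of the eight recursions for every $m\in\mathbb{Z}$. The main obstacle is uniqueness, which I would establish by strong induction on $m\ge 0$. At $m=0$, the formulas for $\eta(1)$ and $\vartheta(1)$ reduce, after using $\vartheta(0)=0$, to the self-referential $2\times 2$ system
\[
  \eta(1) \, = \, -\tfrac{1}{4}\ts\vartheta(1) \, , \qquad
  \vartheta(1) \, = \, \tfrac{1}{4}\ts\vartheta(1) \, ,
\]
which forces $\vartheta(1)=0$ and hence $\eta(1)=0$; the remaining base values $\eta(2),\eta(3),\vartheta(2),\vartheta(3)$ then follow from the corresponding $m=0$ lines of the other recursions. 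For the inductive step, any $m\ge 4$ is written uniquely as $m=4k+\ell$ with $k\ge 1$ and $\ell\in\{0,1,2,3\}$, so that $k+1\le m/4+1<m$; the recursion then expresses $\eta(m)$ and $\vartheta(m)$ in terms of values at $k$ and $k+1$, both handled by the induction hypothesis and hence vanishing. Negative indices are covered by the symmetries $\eta(-m)=\eta(m)$ and $\vartheta(-m)=(-1)^{m}\vartheta(m)$, which follow from shifting the summation index in the defining limits.

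With $\gamma=\delta^{}_{0}$ in hand, Fourier transformation identifies $\widehat{\gamma}$ with Lebesgue measure via
\[
  \widehat{\gamma}(f) \, = \, \delta^{}_{0}\bigl(\widehat{f}\bigr)
  \, = \, \widehat{f}(0) \, = \, \int_{\mathbb{R}} f(x) \dd x
  \, = \, \lambda(f)
\]
for every $f\in C_{\mathsf{c}}(\mathbb{R})$, which closes the argument.
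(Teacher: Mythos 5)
Your proposal follows the same route as the paper: existence of $\eta$ and $\vartheta$ from unique ergodicity, identification of $\eta(m)=\delta^{}_{m,0}$ and $\vartheta\equiv 0$ as the unique solution of the eight recursions with the given initial data, and Poisson-free Fourier transformation of $\delta^{}_{0}$ to $\lambda$. The only difference is that you spell out the uniqueness induction (base case from the $m=0$ lines, strong induction via $m=4k+\ell$, negative indices by the symmetries $\eta(-m)=\eta(m)$ and $\vartheta(-m)=(-1)^{m}\vartheta(m)$) which the paper compresses into ``a careful inspection'' and a reference; your details check out.
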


As in the case of the TM sequence, the non-trivial pure point part of
the dynamical spectrum (which is $\mathbb{Z} [\frac{1}{2}]$ once
again) is not `seen' by the diffraction measure, while $\lambda$ (with
multiplicity $2$) represents once again the maximal spectral measure
in the ortho-complement of the pure point sector.  However, the
missing pure point component can be recovered by a suitable factor
system, the latter obtained via the block map defined by
Eq.~\eqref{BBG-eq:block-map}. The corresponding factor is represented
by a limit-periodic substitution rule that is somewhat reminiscent of
the paper folding sequence \cite{BBG-AS}; see
\cite[Sec.~10.2]{BBG-TAO} for a complete discussion and
\cite{BBG-BLvE} for the general connection between dynamical and
diffraction spectra. The structure underlying the RS sequence can be
generalised to higher-dimensional lattice substitutions in a rather
systematic way; see \cite{BBG-F03} for details.

\subsection{Bernoullisation}

Let us begin this discussion by recalling the structure of the full
Bernoulli shift from the viewpoint of kinematic diffraction. The
classic coin tossing process leads to the Dirac comb
\[
   \omega \, = \,
   \sum_{n\in\mathbb{Z}} X(n) \, \delta_{n} \, ,
\]
where the $(X(n))^{}_{n\in\mathbb{Z}}$ form an i.i.d.\ family of
random variables, each taking values $1$ and $-1$ with probabilities
$p$ and $1-p$, respectively.  By an application of the strong law of
large numbers (SLLN, see \cite{BBG-Ete81} for a favourable formulation),
almost every realisation has the autocorrelation measure
\[
    \gamma \, = \, (2p-1)^{2} \, \delta^{}_{\mathbb{Z}}
    + 4 p (1-p) \, \delta^{}_{0} \, ,
\]
and hence (via Fourier transform) the diffraction measure
\[
    \widehat{\gamma} \, = \, (2p-1)^{2} \, \delta^{}_{\mathbb{Z}}
    + 4 p (1-p) \, \lambda \, .
\]
Here, we have used the classic Poisson summation formula
$\widehat{\delta^{}_{\mathbb{Z}}} = \delta^{}_{\mathbb{Z}}$; compare
\cite{BBG-BG11} and references therein, as well as
\cite[Sec.~9.2]{BBG-TAO} for a formulation in the diffraction
context.  When $p=\frac{1}{2}$, the diffraction boils down to
$\widehat{\gamma} = \lambda$. Here, the point part is extinct because
the average scattering strength vanishes. For proofs, we refer the
reader to \cite{BBG-BM98,BBG-BBM10}, while \cite{BBG-Kuel1,BBG-Kuel2}
contain several important and non-trivial generalisations and
extensions; see also \cite{BBG-DL} for important related material.

The Bernoulli chain has (metric) entropy \cite{BBG-CFS,BBG-EW}
\[
   h(p) \, = \, - p\log (p) - (1\!-\!p) \log (1\!-\!p) ,
\]
which is maximal for $p=\frac{1}{2}$, with $h(\frac{1}{2})=\log
(2)$. It vanishes for the deterministic limiting cases
$p\in\{0,1\}$. For the latter, we have
$\omega=\mp\delta^{}_{\mathbb{Z}}$, and consequently obtain the pure
point diffraction measure $\widehat{\gamma} = \delta^{}_{\mathbb{Z}}$,
again via Poisson's summation formula.

Now, the theory of random variables allows for an interpolation
between deterministic (binary) sequences and coin tossing sequences as
follows. If $w \in \{ \pm 1 \}^{\mathbb{Z}}$ denotes a deterministic
sequence (which we assume to be uniquely ergodic for simplicity),
consider the random Dirac comb \cite{BBG-BG09}
\begin{equation}\label{BBG-eq:randrs}
   \omega_{p} \, = \sum_{n\in\mathbb{Z}} w(n)\ts X(n)\, \delta_{n}\, ,
\end{equation}
where $(X(n))^{}_{n\in\mathbb{Z}}$ is, as above, an i.i.d.\ family of
random variables with values in $\{\pm 1\}$ and probabilities $p$ and
$1-p$. This `Bernoullisation' of $w$ can be viewed as a `model of
second thoughts', where the sign of the weight at position $n$ is
changed with probability $1-p$; compare \cite[Sec.~11.2.2]{BBG-TAO}.

Let $w$ now be the Rudin--Shapiro sequence from above.  By a (slightly
more complicated) application of the SLLN, it can be shown
\cite{BBG-BG09} that the autocorrelation $\gamma^{}_{p}$ of the Dirac
comb $\omega^{}_{p}$ is then almost surely given by
\[
   \gamma^{}_{p}  \, = \, (2p-1)^{2}\,\gamma^{}_{\mathrm{RS}} + 
   4 p (1-p)\, \delta^{}_{0}
   \, = \, \delta^{}_{0}\, ,
\]
irrespective of the value of the parameter $p\in [0,1]$. Recall that
two measures with the same autocorrelation are called
\emph{homometric}; see \cite[Sec.~9.6]{BBG-TAO} for background.  Our
observation thus establishes the following classic result; see
\cite{BBG-BG09,BBG-BG11,BBG-TAO} for details.

\begin{thm}
  The random Dirac combs\/ $\omega_{p}$ of Equation~\eqref{BBG-eq:randrs}
  with real parameter values\/ $p\in [0,1]$ are\/ $(\nts\nts$almost surely$)$
  homometric, with absolutely continuous diffraction measure\/
  $\!\widehat{\,\gamma_{p}\,}\! = \widehat{\gamma^{}_{\mathrm{RS}}} =
  \lambda$, irrespective of the value of\/ $p$. In other words, the
  family\/ $\big\{\omega^{}_{p} \mid {p\in [0,1]}\big\}$ is\/
  $(\nts\nts$almost surely$)$ isospectral.  \hfill \qed
\end{thm}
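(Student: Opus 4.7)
My plan is to compute the autocorrelation coefficients of $\omega_p$ directly and split each of them into a deterministic piece controlled by the previous theorem and a random remainder that the strong law of large numbers kills. Since $w(n), X(n) \in \{\pm 1\}$, the $m$-th coefficient of $\gamma_p = \eta_p \hspace*{1pt} \delta^{}_{\mathbb{Z}}$ is
\[
  \eta_p(m) \, = \, \lim_{N\to\infty} \frac{1}{2N+1} \sum_{n=-N}^{N} w(n)\, w(n-m)\, X(n)\, X(n-m),
\]
whose almost-sure existence is itself part of the claim. For $m = 0$ the summand is identically $1$ and $\eta_p(0) = 1$ trivially. For $m \neq 0$, the random variables $X(n)$ and $X(n-m)$ are independent with $\mathbb{E}[X(n) X(n-m)] = (2p-1)^2$, so I would write $X(n)X(n-m) = (2p-1)^2 + W_n^{(m)}$ with $W_n^{(m)}$ bounded and of mean zero. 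This splits $\eta_p(m)$ into a deterministic piece $(2p-1)^2 \eta_{\mathrm{RS}}(m)$ and a random piece $\lim_N (2N+1)^{-1} \sum_n w(n) w(n-m) W_n^{(m)}$.

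By the preceding theorem, $\eta_{\mathrm{RS}}(m) = \delta_{m,0}$, so the deterministic piece vanishes whenever $m \neq 0$. For the random piece, set $Z_n := w(n) w(n-m) W_n^{(m)}$: these are uniformly bounded, mean zero, and $\mathrm{Cov}(Z_n, Z_{n'}) = 0$ whenever $|n-n'| \notin \{0, |m|\}$, since in that case $W_n^{(m)}$ and $W_{n'}^{(m)}$ depend on disjoint pairs from the i.i.d.\ family $(X(n))_{n\in\mathbb{Z}}$. Hence $\mathrm{Var}\bigl( \sum_{n=-N}^{N} Z_n \bigr) = O(N)$, and a Chebyshev-plus-Borel--Cantelli argument along the sparse subsequence $N_k = k^2$, together with monotone interpolation using the uniform bound on $Z_n$, yields almost-sure convergence of the ergodic average to $0$. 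Because the exceptional null set depends on $m$ only, the countable union over $m \in \mathbb{Z}$ is still null.

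Putting the pieces together, $\eta_p(m) = \delta_{m,0}$ almost surely, so $\gamma_p = \delta_0$ irrespective of $p$, and Fourier transforming gives $\widehat{\gamma_p} = \widehat{\delta_0} = \lambda$ almost surely; this is precisely the homometry and isospectrality claim. The principal obstacle is the SLLN-type estimate in the second paragraph: the mild complication compared with the plain SLLN is that $W_n^{(m)}$ and $W_{n+|m|}^{(m)}$ share an $X$-variable and so are correlated, but this $|m|$-range correlation contributes only an extra $O(N)$ term to the variance of the partial sum, which is absorbed harmlessly by the Borel--Cantelli step. Once this is in hand, the rest of the argument is bookkeeping.
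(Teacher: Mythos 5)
Your decomposition $X(n)X(n-m)=(2p-1)^2+W^{(m)}_n$ reproduces exactly the paper's formula $\gamma_p=(2p-1)^2\,\gamma^{}_{\mathrm{RS}}+4p(1-p)\,\delta_0=\delta_0$ (note $(2p-1)^2+4p(1-p)=1$ for the $m=0$ term), and your Chebyshev--Borel--Cantelli argument along $N_k=k^2$ with the finite-range ($|n-n'|\in\{0,|m|\}$) correlation structure is precisely the ``slightly more complicated application of the SLLN'' that the paper invokes and delegates to its reference. The proposal is correct and follows essentially the same route as the paper, merely making the probabilistic estimate explicit.
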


This result shows that diffraction can be insensitive to entropy,
because the family of Dirac combs $\omega_{p}$ of
Eq.~\eqref{BBG-eq:randrs} continuously interpolates between the
deterministic Rudin--Shapiro case with zero entropy and the completely
random Bernoulli chain with maximal entropy $\log(2)$. Clearly, the
Bernoullisation procedure can be applied to other sequences as well,
and can be generalised to higher dimensions. For further aspects of
entropy versus diffraction, we refer to
\cite{BBG-BG09,BBG-BG12,BBG-BLR}.

\subsection{Random Dimers on the Line}

Another instructive example \cite{BBG-BE11} is based on certain dimer
configurations on $\mathbb{Z}$. To formulate it, we follow the
exposition in \cite{BBG-BG12} and partition $\mathbb{Z}$ into a
close-packed arrangement of `dimers' (pairs of neighbours), without
gaps or overlaps. Clearly, there are just two possibilities to do so,
because the position of the first dimer fixes that of all
others. Next, decorate each dimer randomly with either $(1,-1)$ or
$(-1,1)$, with equal probability. This results in patches such as
\[
\begin{split}
 \dots [+ \,\, -]\nts\nts [-\,\,  +]\nts\nts [-\,\,  +]\nts\nts 
       [+\,\,  -]\nts &\nts [-\,\, +]\nts\nts [-\,\, +]\nts\nts 
      [-\,\, +]\nts\nts [+\,\, -]\nts\nts [+\,\, -]\dots\\
      \dots [-\,\, +]\nts\nts [+\,\, -]\nts\nts [+\,\, -]\nts\nts 
      [-\,\, +] \nts\nts
      [+\, &\, -]\nts\nts [+\,\, -]\nts\nts [+\,\, -]\nts\nts 
      [-\,\, +]\nts\nts [-\,\, +]\nts\nts [+\,\, -]\dots
\end{split}
\]
where the dimer boxes are indicated by brackets. The set of all
decorated sequences defined in this way is given by
\[
   \mathbb{X} \, = \, \bigl\{ w \in \{ \pm 1 \} ^{\mathbb{Z}} \mid
   M(w) \subset 2 \mathbb{Z} \,\text{ or }
   M(w) \subset 2 \mathbb{Z} + 1 \bigr\} \, ,
\]
where $M(w) := \{ n\in\mathbb{Z} \mid w(n) = w(n+1) \}$. Note that
$M(w)$ is empty precisely for the two periodic sequences that are
defined by $w(n)=\pm (-1)^n$ for $n\in\mathbb{Z}$. Clearly, 
$\mathbb{X}\subset\{\pm 1\}^{\mathbb{Z}}$ is closed and hence compact.

Let $w\in \mathbb{X}$ and consider the corresponding signed Dirac comb
on $\mathbb{Z}$ with binary weights $w(n)\in\{\pm 1 \}$. One can then show
(again via the SLLN) that the corresponding autocorrelation almost
surely exists and is given by \cite{BBG-BE11}
\begin{equation}\label{BBG-eq:dms-auto}
   \gamma\, =\, \delta^{}_{0} - \frac{1}{2} 
              (\delta^{}_{1} + \delta^{}_{-1})\, .
\end{equation}
The corresponding diffraction measure is then
\begin{equation}\label{BBG-eq:dms}
   \widehat{\gamma} \,=\,
   \bigl( 1 - \cos(2 \pi k) \bigr) \lambda\, ,
\end{equation}
which is again purely absolutely continuous. Here, the (smooth)
Radon--Nikodym density relative to $\lambda$ is written as a function
of $k$. Note that the diffraction measure for general weights
$h_{+}$ and $h_{-}$ is given by
\[
   \widehat{\gamma^{}_{\pm}} \, = \, 
   \frac{\lvert h_{+} + h_{-} \rvert^{2}}{4} 
   \, \delta^{}_{\mathbb{Z}} +  \frac{\lvert h_{+} - h_{-} \rvert^{2}}{4}
   \, \widehat{\gamma}
\]
with $\widehat{\gamma}$ as in Eq.~\eqref{BBG-eq:dms}. In particular,
the measure $\widehat{\gamma^{}_{\pm}}$ shows only the `trivial' pure
point diffraction contribution that arises as the consequence of
$\mathbb{Z}$ being the support of the weighted measure under
consideration.  The same phenomenon also occurs for general
(non-balanced) TM and RS sequences; compare \cite[Rems.~10.3 and
10.5]{BBG-TAO}.

On first sight, the system looks disordered, with entropy $\frac{1}{2}
\log (2)$. This seems (qualitatively) reflected by the
diffraction. However, the system also defines a measure-theoretic
dynamical system under the action of $\mathbb{Z}$, as generated by the
shift. As such, it has a dynamical spectrum that does contain a pure
point part, with eigenvalues $0$ and $\frac{1}{2}$; we refer to
\cite{BBG-Q} for general background on this concept, and to
\cite{BBG-BE11} for the actual calculation of the eigenfunctions. The
extension to a (continuous) dynamical system $\mathbb{X}_{\mathrm{c}}$
under the general translation action of $\mathbb{R}$ is done via
suspension; see \cite[Ch.~11.1]{BBG-CFS} (where the suspension is
called a special flow) or \cite{BBG-EW} for general background.

This finding suggests that some degree of order must be present that
is neither visible from the entropy calculation nor from the
diffraction measure alone. Indeed, in analogy with the situation of
the TM and the RS sequence, one can define a factor of the system by a
sliding block map $\phi\!:\, \mathbb{X}\longrightarrow \{\pm
1\}^{\mathbb{Z}}$ defined by $(\phi w)(n) = -w(n)w(n+1)$. It maps
$\mathbb{X}$ globally 2:1 onto
\[
  \mathbb{Y}=\phi(\mathbb{X})=\bigl\{v\in\{\pm 1\}^{\mathbb{Z}}\mid
  \mbox{$v(n)=1$ for all $n\in2\mathbb{Z}$ or for all 
         $n\in 2\mathbb{Z}+1$}\bigr\}.
\]
The suspension $\mathbb{Y}_{\!\mathrm{c}}$ (for the action of $\mathbb{R}$) 
is defined as above. The mapping $\phi$ extends accordingly.

The autocorrelation and diffraction measures of the signed Dirac comb
$v\delta^{}_{\mathbb{Z}}$ for an element $v\in\mathbb{Y}$ are almost
surely given by
\[
    \gamma \, = \, \frac{1}{2}\delta^{}_{0} + 
    \frac{1}{2}\delta^{}_{2\mathbb{Z}}
    \quad\text{and}\quad
    \widehat{\gamma}\, =\, \frac{1}{2}\lambda + 
    \frac{1}{4}\delta^{}_{\mathbb{Z}/2}\, .
\]
The diffraction of the factor system $\mathbb{Y}$ uncovers the
`hidden' pure point part of the dynamical spectrum, which was absent
in the purely absolutely continuous diffraction of the signed Dirac
comb $w\hspace{1pt} \delta^{}_{\mathbb{Z}}$ with $w\in\mathbb{X}$. In
summary, we have the following situation \cite{BBG-BE11,BBG-BLvE}.

\begin{thm}
  The diffraction measure of the close-packed dimer system\/
  $\mathbb{X}$ with balanced weights is purely absolutely continuous
  and given by Eq.~\eqref{BBG-eq:dms}, which holds almost surely relative
  to the natural invariant measure of the system.

  The dynamical spectrum of the continuous close-packed dimer system\/
  $\mathbb{X}_{\mathrm{c}}$ under the translation action of\/
  $\mathbb{R}$ contains the pure point part\/ $\mathbb{Z}/2$ together
  with a countable Lebesgue spectrum.

  The non-trivial part\/ $\mathbb{Z} + \frac{1}{2}$ of the dynamical
  point spectrum is not reflected by the diffraction spectrum of\/
  $\mathbb{X}_{\mathrm{c}}$, but can be recovered via the diffraction
  spectrum of a suitable factor, such as\/ $\mathbb{Y}_{\!\mathrm{c}}$.
  \qed
\end{thm}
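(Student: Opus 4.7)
The plan is to handle the three claims in sequence, reducing each to an application of the strong law of large numbers (SLLN) together with an explicit Fourier transform, supplemented by a $\mathbb{Z}/2$-extension argument for the dynamical spectrum. For the first claim, I would parametrise a generic $w\in\mathbb{X}$ by its parity class (dimers occupying $2\mathbb{Z}$ or $2\mathbb{Z}+1$) and by the i.i.d.\ family $(\varepsilon^{}_{k})^{}_{k\in\mathbb{Z}}\in\{\pm 1\}^{\mathbb{Z}}$ decorating the dimers as $(\varepsilon^{}_{k},-\varepsilon^{}_{k})$. A direct inspection of $w(n)\ts w(n+m)$ separates intra-dimer contributions (deterministically equal to $-1$ for $m=\pm 1$) from inter-dimer contributions (products of independent signs of mean zero). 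The SLLN then yields, almost surely, $\eta(0)=1$, $\eta(\pm 1)=-\tfrac{1}{2}$ and $\eta(m)=0$ for $|m|\ge 2$, which is Eq.~\eqref{BBG-eq:dms-auto}. The Fourier transform of $\gamma=\delta^{}_{0}-\tfrac{1}{2}(\delta^{}_{1}+\delta^{}_{-1})$ then gives the absolutely continuous diffraction with density $1-\cos(2\pi k)$, matching Eq.~\eqref{BBG-eq:dms}.

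For the second claim, the eigenvalue $0$ is carried by the constant function and the eigenvalue $\tfrac{1}{2}$ by the parity indicator $f(w)=+1$ if $M(w)\subset 2\mathbb{Z}$ and $f(w)=-1$ otherwise, which satisfies $f\circ T=-f$ because the shift interchanges the two parity classes. Passing to the $\mathbb{R}$-suspension $\mathbb{X}_{\mathrm{c}}$, the integer eigenvalues from the suspension structure combine with this $\tfrac{1}{2}$ to fill out all of $\mathbb{Z}/2$. To identify the ortho-complement of the pure point sector as a system with countable Lebesgue spectrum, I would view $\mathbb{X}$ as a $\mathbb{Z}/2$-extension of a two-cycle with Bernoulli fibre process $(\varepsilon^{}_{k})$; on the ortho-complement the shift is measure-theoretically conjugate to the standard Bernoulli shift on $\{\pm 1\}^{\mathbb{Z}}$, whose spectrum is classically known to be countable Lebesgue, and this property descends to the $\mathbb{R}$-suspension. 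This is the most delicate step: explicitly producing the invariant splitting of $L^{2}(\mathbb{X}_{\mathrm{c}})$ into countably many cyclic subspaces of Lebesgue maximal spectral type requires the most care, while the remaining components reduce to routine SLLN and Fourier arguments.

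The third claim has two halves. The absence of $\mathbb{Z}+\tfrac{1}{2}$ from the diffraction of $\mathbb{X}_{\mathrm{c}}$ is immediate from the first claim, since the diffraction is purely absolutely continuous and hence has no pure point component whatsoever. To detect these frequencies via $\mathbb{Y}_{\!\mathrm{c}}$, I would use that a generic $v\in\mathbb{Y}$ equals $+1$ on one parity class of $\mathbb{Z}$ and is an i.i.d.\ sign sequence on the other, so the SLLN yields $\eta^{}_{\mathbb{Y}}(0)=1$, $\eta^{}_{\mathbb{Y}}(2k)=\tfrac{1}{2}$ for $k\neq 0$ and $\eta^{}_{\mathbb{Y}}(m)=0$ for odd $m$. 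This gives $\gamma^{}_{\mathbb{Y}}=\tfrac{1}{2}\delta^{}_{0}+\tfrac{1}{2}\delta^{}_{2\mathbb{Z}}$, and Poisson summation yields $\widehat{\gamma^{}_{\mathbb{Y}}}=\tfrac{1}{2}\lambda+\tfrac{1}{4}\delta^{}_{\mathbb{Z}/2}$, whose pure point part sits on $\mathbb{Z}/2$ and in particular contains the missing eigenvalues $\mathbb{Z}+\tfrac{1}{2}$.
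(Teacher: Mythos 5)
Your proposal is correct and follows essentially the same route as the paper (which itself only summarises the computations from the cited work of Baake--van Enter): SLLN applied to the parity-class/i.i.d.-sign parametrisation to get $\gamma=\delta^{}_{0}-\tfrac{1}{2}(\delta^{}_{1}+\delta^{}_{-1})$ and its Fourier transform, the parity indicator as the eigenfunction for $\tfrac{1}{2}$ lifted to $\mathbb{Z}/2$ on the suspension, and the block-map factor $\mathbb{Y}$ whose lattice-supported autocorrelation recovers $\mathbb{Z}+\tfrac{1}{2}$ via Poisson summation. The only loose phrasing is that the shift is `conjugate to the Bernoulli shift on the ortho-complement' --- a subspace is not a dynamical system --- but the intended statement (that the maximal spectral type on the ortho-complement of the pure point sector is Lebesgue of countable multiplicity, inherited from the Bernoulli fibre process) is the right one, and it is precisely the step that the paper likewise delegates to the references.
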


As in the case of the Thue--Morse system, where the missing pure point
part of the dynamical spectrum is recovered by the diffraction measure
of the period doubling factor, we thus see that and how we can recover
the missing eigenvalue via a generalised $2$-point function.  This
observation can be extended to symbolic systems over finite
alphabets and also to uniquely ergodic Delone dynamical systems of
finite local complexity; see \cite{BBG-BLvE} for details.

\subsection{Ledrappier's Shift Space}

For a long time, people had expected that higher dimensions
are perhaps more difficult, but not substantially different.
This turned out to be a false premise though, as can be
seen from the now classic monograph \cite{BBG-Sch}.

In our present context, we pick one characteristic example, the system
due to Ledrappier \cite{BBG-L}, to demonstrate a new phenomenon. We
follow the brief exposition in \cite{BBG-BG12} and consider a specific
subset of the full shift space $\{\pm 1\}^{\mathbb{Z}^{2}}$, defined
by
\begin{equation}\label{BBG-eq:def-L}
  \mathbb{X}_{\mathrm{L}} \, =\,
  \bigl\{ w \in \{\pm 1\}^{\mathbb{Z}^{2}} \! \mid
     w(x)\, w(x+e^{}_{1})\, w(x+e^{}_{2}) = 1 \,
     \mbox{ for all }\, x \in \mathbb{Z}^{2} \bigr\},
\end{equation}
where $e^{}_{1}$ and $e^{}_{2}$ denote the standard Euclidean basis
vectors in the plane. On top of being a closed subshift,
$\mathbb{X}_{\mathrm{L}}$ is also an Abelian group (here written
multiplicatively), which then comes with a unique, normalised Haar
measure. The latter is also shift invariant, and the most natural
measure to be considered in our context; see also the reformulation
in terms of Gibbs (or equilibrium) measures in \cite{BBG-Slawny}.

The system is interesting because the number of patches of a given
radius (up to translations) grows exponentially in the radius rather
than in the area of the patch. This phenomenon is called \emph{entropy
  of rank $1$}, and indicates a new class of systems in higher
dimensions. More precisely, along any lattice direction of
$\mathbb{Z}^{2}$, the linear subsystems essentially behave like
one-dimensional Bernoulli chains. It is thus not too surprising that
the diffraction measure satisfies the following theorem, though its
proof \cite{BBG-BW10} has to take care of the special directions connected
with the defining relations of $\mathbb{X}_{\mathrm{L}}$.

\begin{thm}
  If\/ $w$ is an element of the Ledrappier subshift\/
  $\mathbb{X}_{\mathrm{L}}\!$ of Eq.~\eqref{BBG-eq:def-L}, the
  corresponding weighted Dirac comb\/ $w\ts \delta^{}_{\mathbb{Z}^2}$ has
  diffraction measure\/ $\lambda$, which holds almost surely relative to
  the Haar measure of\/ $\mathbb{X}_{\mathrm{L}}$.  \qed
\end{thm}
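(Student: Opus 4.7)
The plan is to prove the stronger pointwise statement that the autocorrelation coefficients
\[
   \eta(m) \, := \lim_{N\to\infty} \frac{1}{(2N+1)^{2}} \sum_{\lvert n\rvert^{}_{\infty} \le N}
   w(n) \, w(n+m) , \qquad m \in \mathbb{Z}^{2},
\]
satisfy $\eta(m) = \delta^{}_{m,0}$ for Haar-almost every $w \in \mathbb{X}_{\mathrm{L}}$. This immediately yields $\gamma = \delta^{}_{0}$, whence $\widehat{\gamma} = \widehat{\delta^{}_{0}} = \lambda$, in complete analogy with the Rudin--Shapiro result above.

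For the spatial averages, I would invoke the two-dimensional Birkhoff ergodic theorem applied to the $\mathbb{Z}^{2}$-shift action on $(\mathbb{X}_{\mathrm{L}}, \nu)$, with $\nu$ the Haar measure. This action is ergodic because $\mathbb{X}_{\mathrm{L}}$ is a compact Abelian group whose Pontryagin dual is generated by coordinate functions modulo the defining relations, and every non-trivial such character has an infinite shift orbit. For fixed $m$, Birkhoff converts $\eta(m)$ into the expectation $\mathbb{E}_{\nu}\bigl[w(0)\, w(m)\bigr]$ almost surely, and a countable union over $m \in \mathbb{Z}^{2}$ produces a single full-measure set on which the convergence holds for every $m$ simultaneously.

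The expectation is then computed algebraically. Iterating the defining relation $w(x) w(x+e^{}_{1}) w(x+e^{}_{2}) = 1$ yields, for $\ell \ge 0$, the Pascal-triangle-mod-$2$ identity
\[
   w(k e^{}_{1} + \ell e^{}_{2}) \, = \prod_{\substack{j\ge 0 \\ \binom{\ell}{j}\text{ odd}}}
   w\bigl((k+j)\, e^{}_{1}\bigr),
\]
so any configuration is determined by its restriction to row $0$ once $\ell \ge 0$; by shift invariance of $\nu$, one may always arrange $m^{}_{2} \ge 0$. The projection $w \mapsto w|^{}_{\mathbb{Z} e^{}_{1}}$ pushes $\nu$ onto Haar measure on $\{\pm 1\}^{\mathbb{Z}}$, which is the symmetric Bernoulli measure. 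Hence $w(0)\, w(m)$ equals the product $\prod_{p \in S^{}_{m}} w(p\, e^{}_{1})$ of independent symmetric $\pm 1$ variables, where $S^{}_{m}$ is the symmetric difference of $\{0\}$ with $\{m^{}_{1} + j : \binom{m^{}_{2}}{j}\text{ odd}\}$. By Lucas's theorem, the latter set has cardinality $2^{s(m^{}_{2})}$ in the binary digit sum, so $S^{}_{m}$ is empty precisely when $m = 0$; in all other cases the Bernoulli expectation vanishes.

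The main technical obstacle is the special-direction care flagged in \cite{BBG-BW10}: since the defining relations couple $w(0)$, $w(e^{}_{1})$ and $w(e^{}_{2})$ rigidly rather than probabilistically, the usual i.i.d.\ reduction has to be verified along these axes. The Pascal-mod-$2$ formula handles them uniformly by encoding exactly how row $\ell$ correlates with row $0$, and the proof thus boils down to the combinatorial identification of $S^{}_{m}$ as empty only at $m = 0$.
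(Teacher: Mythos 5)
Your argument is correct in substance. Note first that the paper itself gives no proof of this theorem but defers it to \cite{BBG-BW10}, so the comparison is with that reference rather than with an in-text argument. There, the second moments are computed via harmonic analysis on the compact Abelian group $\mathbb{X}_{\mathrm{L}}$: the function $w\mapsto w(0)\,w(m)$ is a character, and its integral against Haar measure vanishes unless the character is trivial on $\mathbb{X}_{\mathrm{L}}$, which happens only for $m=0$ because $1+u^{m_1}v^{m_2}$ does not lie in the ideal generated by $1+u+v$ in $\mathbb{F}_2[u^{\pm1},v^{\pm1}]$. Your reduction to row $0$ via the Pascal-triangle-mod-$2$ identity and the Bernoulli projection is a concrete instantiation of exactly this orthogonality (your identification of when $S_m$ is empty is the combinatorial form of the non-divisibility statement), so the two routes establish the same thing; the character formulation is shorter and also delivers ergodicity (indeed mixing) in one stroke from the absence of non-trivial finite orbits in the dual, which you assert but would otherwise need to argue separately via the same algebra. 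Three small points to tighten: (i) the paper's autocorrelation is defined by averaging over centred balls, whereas you average over cubes, so you should either run the pointwise ergodic theorem along balls (both are admissible averaging sequences, so this is harmless) or note the equivalence explicitly; (ii) the claim that restriction to row $0$ pushes $\nu$ onto the symmetric Bernoulli measure requires observing that the restriction map is a \emph{surjective} continuous group homomorphism --- surjectivity holds because rows $\ell>0$ are determined by row $0$ and rows $\ell<0$ can be reconstructed by choosing one value per row --- after which Haar measure is pushed to Haar measure; (iii) the case $m_2<0$ should be reduced to $m_2\ge 0$ via $\eta(-m)=\overline{\eta(m)}=\eta(m)$ or stationarity, since the Pascal formula only covers $\ell\ge 0$; you flag this, but it deserves an explicit line.
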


So, the Ledrappier system is homometric to the (full) Bernoulli shift
on $\{ \pm 1 \}^{\mathbb{Z}^2}$, which means that an element of either
system almost surely has diffraction measure $\lambda$. As mentioned
before, via a suitable product of two Rudin--Shapiro chains, also a
deterministic system with diffraction $\lambda$ exists. This clearly
demonstrates the insensitivity of pair correlations to the (entropic)
type of order or disorder in the underlying system; see also
\cite{BBG-BG09}. Due to the defining relation in
Eq.~\eqref{BBG-eq:def-L}, it is clear that certain three-point
correlations in the Ledrappier system cannot vanish, and thus make it
distinguishable from the Bernoulli shift.

Although correlation functions of third order can resolve the
situation in this case (and in many other examples as well
\cite{BBG-DM09,BBG-LM09}), one can consider other dynamical systems
(such as the ($\times 2,\times 3$)-shift \cite{BBG-BW10}) that share
almost all correlation functions with the Bernoulli shift on
$[0,1]^{\mathbb{Z}^{2}}$. This is a clear indication that our present
understanding of `order' is incomplete, and that we still lack a good
set of tools for the detection and classification of order. For a
recent alternative based on direct space statistics, we refer to
\cite{BBG-BGHJ}.

\subsection{Random Matrix Ensembles}

Another interesting class of random point sets derives from the
(scaled) eigenvalue distribution of certain random matrix ensembles;
see \cite{BBG-BK} and references therein. The global eigenvalue
distribution of random orthogonal, unitary or symplectic matrix
ensembles is known to asymptotically follow the classic semi-circle
law. More precisely, this law describes the eigenvalue distribution of
the underlying ensembles of symmetric, Hermitian or (symplectically)
self-dual matrices with Gaussian distributed entries.  The
corresponding random matrix ensembles are called GOE, GUE and GSE,
with attached $\beta$-parameters $1$, $2$ and $4$, respectively. They
permit an interpretation as a Coulomb gas, where $\beta$ is the power
in the central potential; see \cite{BBG-AGZ,BBG-Mehta} for general
background and \cite{BBG-D,BBG-F} for the results that are relevant
here.

{}For matrices of dimension $N$, the semi-circle has radius
$\sqrt{2N/\pi}$ and area $N$. Note that, in comparison with
\cite{BBG-Mehta}, we have rescaled the density by a factor
$1/\sqrt{\pi}$, so that we really have a semi-circle (and not a
semi-ellipse). To study the local eigenvalue distribution for
diffraction, we rescale the central region (between $\pm 1$, say) by
$\sqrt{2N/\pi} $. This leads, in the limit as $N\to\infty$, to an
ensemble of point sets on the line that can be interpreted as a
stationary, ergodic point process of intensity $1$; for $\beta=2$, see
\cite[Ch.~4.2]{BBG-AGZ} and references therein for details.  Since the
underlying process is simple (meaning that, almost surely, no point is
occupied twice), almost all realisations are point sets of density
$1$.

It is possible to calculate the autocorrelation of these processes, on
the basis of Dyson's correlation functions \cite{BBG-D}.  Though these
functions originally apply to the circular ensembles, they have been
adapted to the other ensembles by Mehta \cite{BBG-Mehta}.  For all
three ensembles mentioned above, this leads to an autocorrelation of
the form
\begin{equation}\label{BBG-eq:Dyson-auto}
    \gamma \, = \, \delta_{0} + 
    \bigl( 1 - f(\lvert x \rvert ) \bigr) \lambda 
\end{equation}
where $f$ is a locally integrable function that depends on $\beta$;
see \cite{BBG-BK} for the explicit formulas, and the left panel of
Figure~\ref{BBG-fig:dyson} for an illustration.

The diffraction measure is the Fourier transform of $\gamma$, which
has also been calculated in \cite{BBG-D,BBG-Mehta}. Recalling
$\widehat{\delta_{0}} =\lambda$ and $\widehat{\lambda} = \delta_{0}$,
the result is always of the form
\begin{equation}\label{BBG-eq:Dyson-Fourier}
   \widehat{\gamma}\, = \,\delta_{0} + 
   \bigl( 1 - b(k) \bigr) \lambda \,=\,
   \delta_{0} + h(k) \, \lambda ,
\end{equation}
where $b = \widehat{f}$. The Radon--Nikodym density $h$ depends on
$\beta$ and is summarised in \cite{BBG-BK}.
Figure~\ref{BBG-fig:dyson} illustrates the result for the three
ensembles.

\begin{figure}
\centerline{\includegraphics[width=\textwidth]{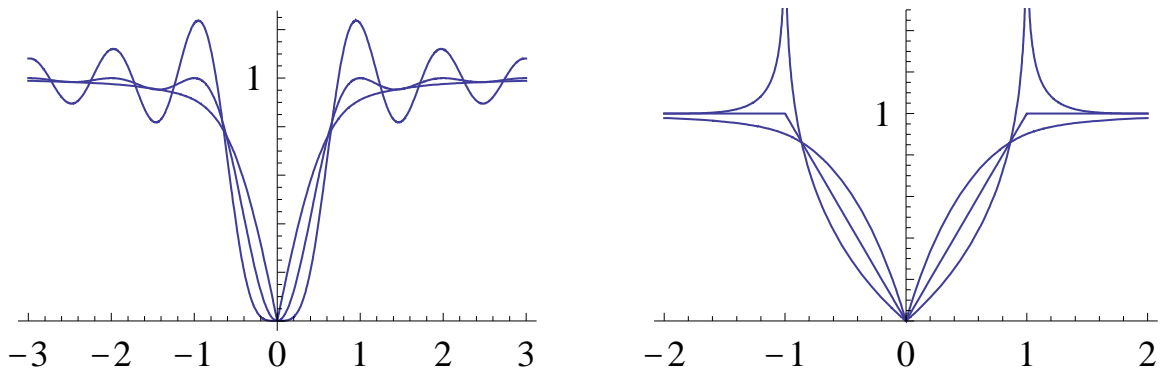}}
\caption{Absolutely continuous part of the autocorrelation (left) and
  the diffraction (right) for the three random matrix derived
  point set ensembles on the
  line, with $\beta\in\{1,2,4\}$. On the left, the oscillatory
  behaviour increases with $\beta$. On the right, $\beta=2$
  corresponds to the piecewise linear function with bends at $0$ and
  $\pm 1$, while $\beta=4$ shows a locally integrable singularity at
  $\pm 1$. The latter reflects the slowly decaying oscillations on
  the left.}  
\label{BBG-fig:dyson}
\end{figure}

A similar approach is possible on the basis of the eigenvalues of
general complex random matrices. This leads to the ensemble studied by
Ginibre \cite{BBG-Mehta}, which is also discussed in
\cite{BBG-BK}. One common feature of the resulting point sets is the
effectively repulsive behaviour of the points, which leads to the
`dip' around $0$ for $\widehat{\gamma}$.  For the two systems
mentioned in this section, we omit the formulation of the full results
and refer the reader to \cite{BBG-BK} for details. Further
developments around determinantal and related point processes are
described in reference \cite{BBG-BKM}.

\section{The Renewal Process}

A large and interesting class of processes in one dimension can be
described as a \emph{renewal} process \cite{BBG-F2,BBG-BBM10,BBG-BK}.
Here, one starts from a probability measure $\mu$ on $\mathbb{R}_{+}$
(the positive real line) and considers a machine that moves at
constant speed along the real line and drops a point on the line with
a waiting time that is distributed according to $\mu$.  Whenever this
happens, the internal clock is reset and the process resumes. Let us
(for simplicity) assume that both the velocity of the machine and the
expectation value of $\mu$ are $1$, so that we end up with
realisations that are, almost surely, point sets in $\mathbb{R}$ of
density $1$ (after we let the starting point of the machine move to
$-\infty$, say).
 
Clearly, the resulting process is stationary and can thus be analysed
by considering all realisations which contain the origin.  Moreover,
there is a clear (distributional) symmetry around the origin, so that
we can determine the corresponding autocorrelation $\gamma$ of almost
all realisations from studying what happens to the right of
$0$. Indeed, if we want to know the frequency per unit length of the
occurrence of two points at distance $x$ (or the corresponding
density), we need to sum the contributions that $x$ is the first point
after $0$, the second point, the third, and so on. In other words, we
almost surely obtain the autocorrelation
\begin{equation} \label{BBG-eq:auto-1}
    \gamma \, = \, \delta_{0} + \nu + \widetilde{\nu}
\end{equation}
with $\nu = \mu + \mu * \mu + \mu * \mu * \mu + \ldots$, where the
proper convergence of the sum of iterated convolutions follows from
\cite[Lemma~4]{BBG-BBM10} or from \cite[Sec.~11.3]{BBG-TAO}. Note that
the point measure at $0$ simply reflects the fact that the almost sure
density of the resulting point set is $1$. Indeed, $\nu$ is a
translation bounded positive measure, and satisfies the renewal
relations (compare \cite[Ch.~XI.9]{BBG-F2} or \cite[Prop.~1]{BBG-BBM10}
for a proof)
\begin{equation}\label{BBG-eq:ren-rel}
   \nu \, = \, \mu + \mu * \nu \qquad\text{and}\qquad
   (1-\widehat{\mu}\, )\, \widehat{\nu}
    \, = \, \widehat{\mu}\, ,
\end{equation}
where $\widehat{\mu}$ is a uniformly continuous and bounded function
on $\mathbb{R}$. The second equation emerges from the first by Fourier
transform, but has been rearranged to highlight the relevance of the
set $S=\{k \mid \widehat{\mu} (k) = 1 \}$ of singularities.  In this
setting, the measure $\gamma$ of Eq.~\eqref{BBG-eq:auto-1} is both
positive and positive definite.

Based on the structure of the support of the underlying probability
measure $\mu$, one can determine the diffraction of the renewal
process explicitly. To do so for a probability measure $\mu$ on
$\mathbb{R}_{+}$ with mean $1$, we assume the existence of
a moment of $\mu$ of order $1+\varepsilon$ for some 
$\,\varepsilon > 0$; we refer to \cite{BBG-BBM10} for details on this 
condition. The diffraction measure of the point set realisations
of the stationary renewal process based on $\mu$ almost
surely is of the form
\[
    \widehat{\gamma} \, = \, 
    \widehat{\gamma}^{}_{\mathsf{pp}}  + (1-h)\,\lambda ,
\]
where $h$ is a locally integrable function on $\mathbb{R}$ that is
continuous almost everywhere. The pure point part is trivial,
meaning $\widehat{\gamma} = \delta^{}_{0}$, unless the support
of $\mu$ is contained in a lattice. The details are stated below in
Theorem~\ref{BBG-thm:renewal}. Proofs of these claims as well as 
further results can be found in \cite{BBG-BBM10,BBG-BK,BBG-TAO}.

The renewal process is a versatile method to produce interesting point
sets on the line. These include random tilings with finitely many
intervals (which are Delone sets) as well as the homogeneous Poisson
process on the line (where $\mu$ is the exponential distribution with
mean $1$); see \cite[Sec.~3]{BBG-BBM10} for explicit examples and
applications.  In particular, if one employs a suitably normalised
version of the Gamma distribution, one can formulate a one-parameter
family of renewal processes that continuously interpolates between the
Poisson process (total positional randomness) and the lattice
$\mathbb{Z}$ (perfect periodic order); compare \cite[Ex.~3]{BBG-BBM10} for
more. The general result reads as follows.

\begin{thm}\label{BBG-thm:renewal} 
  Let\/ $\varrho$ be a probability measure on\/ $\mathbb{R}_{+}$ with
  mean\/ $1$, and assume that a moment of\/ $\varrho$ of order\/
  $1+\varepsilon$ exists for some\/ $\varepsilon > 0$. Then, the point
  sets obtained from the stationary renewal process based on\/
  $\varrho$ almost surely have a diffraction measure of the form
\[
    \widehat{\gamma} \, = \, 
    \widehat{\gamma}^{}_{\mathsf{pp}}  + (1-h)\,\lambda\, ,
\]
  where\/ $h$ is a locally integrable function on\/ $\mathbb{R}$ that is
  continuous except for at most countably many points\/ $($namely those
  of the set\/ $S=\{k\mid \widehat{\varrho}(k)=1\})$. On\/
  $\mathbb{R}\setminus S$, the function\/ $h$ is given by
\[
    h(k) \, = \, \frac{2\,\bigl(\lvert\widehat{\varrho}
    (k)\rvert^2 - \mathrm{Re} (\widehat{\varrho}(k))\bigr)}
    {\lvert 1 - \widehat{\varrho} (k)\rvert^2}\, .
\]
  Moreover, the pure point part is
\[
   \widehat{\gamma}^{}_{\mathsf{pp}} \, = \,
   \begin{cases} \delta^{}_{0} , & \text{if\/ $\mathrm{supp}
     (\varrho)$ is not a subset of a lattice}, \\
       \delta^{}_{\mathbb{Z}/b} , & \text{otherwise},
     \end{cases}
\]
  where\/ $b\mathbb{Z}$ is the coarsest lattice that contains\/
   $\mathrm{supp} (\varrho)$.    \qed
\end{thm}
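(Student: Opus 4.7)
The plan is to Fourier-transform the explicit autocorrelation $\gamma = \delta^{}_{0} + \nu + \widetilde{\nu}$ from Eq.~\eqref{BBG-eq:auto-1} by means of the renewal identity in Eq.~\eqref{BBG-eq:ren-rel}. On the complement of the singular set $S = \{k : \widehat{\varrho}(k) = 1\}$, the second relation in \eqref{BBG-eq:ren-rel} solves to yield $\widehat{\nu} = \widehat{\varrho}/(1 - \widehat{\varrho})$, a bounded continuous function. Combined with $\widehat{\delta^{}_{0}} = \lambda$ and $\widehat{\widetilde{\nu}} = \overline{\widehat{\nu}}$, this gives the Lebesgue density
\[
   1 + 2\,\mathrm{Re}\!\left(\frac{\widehat{\varrho}}{1 - \widehat{\varrho}}\right)
\]
for $\widehat{\gamma}$ on $\mathbb{R}\setminus S$. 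A one-line manipulation using $z(1-\overline{z}) + \overline{z}(1-z) = 2\,\mathrm{Re}(z) - 2\lvert z\rvert^{2}$ with $z = \widehat{\varrho}(k)$ rewrites this as $1 - h(k)$ in the stated form, and continuity of $h$ on $\mathbb{R}\setminus S$ is then immediate from uniform continuity of $\widehat{\varrho}$ together with the non-vanishing of $1 - \widehat{\varrho}$ outside $S$.

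Next, I would identify $S$ via the classical observation that $\lvert\widehat{\varrho}(k)\rvert \le 1$, with the value $1$ forcing $\mathrm{e}^{-2\pi\mathrm{i}kx} \equiv 1$ on $\mathrm{supp}(\varrho)$, that is, $kx \in \mathbb{Z}$ for $\varrho$-almost every $x$. When $\mathrm{supp}(\varrho)$ lies in no lattice, only $k = 0$ is admissible, so $S = \{0\}$. When $b\mathbb{Z}$ is the coarsest lattice containing $\mathrm{supp}(\varrho)$, the maximality of $b$ is equivalent to saying that the integers $\{x/b : x \in \mathrm{supp}(\varrho)\}$ have greatest common divisor $1$, and from this one concludes $S = \mathbb{Z}/b$. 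The atom at $k = 0$ in $\widehat{\gamma}$ is then forced by the almost-sure intensity $1$ of the stationary process: $\gamma$ has average density $1$, which translates by a Wiener--Khintchine-type argument into a $\delta^{}_{0}$-atom of weight $1$ in $\widehat{\gamma}$. In the lattice case, $\gamma$ is supported on $b\mathbb{Z}$, whence $\widehat{\gamma}$ is $(1/b)$-periodic, and this single atom replicates along $\mathbb{Z}/b$ to produce $\delta^{}_{\mathbb{Z}/b}$.

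The hard part, in my view, is the local analysis at the points of $S$: one must verify that no singular continuous component accumulates there and that the density $1 - h$ is locally integrable up to these removable singularities. This is exactly where the moment assumption of order $1 + \varepsilon$ intervenes. At any $k^{}_{0} \in S$, using $(1/b)$-periodicity of $\widehat{\varrho}$ in the lattice case, the expansion $\widehat{\varrho}(k^{}_{0} + t) = 1 - 2\pi\mathrm{i}\, t + O(\lvert t\rvert^{1+\varepsilon})$ follows from the mean-one condition and the $(1{+}\varepsilon)$-moment bound, so that $\lvert 1 - \widehat{\varrho}\rvert^{2} \sim (2\pi t)^{2}$ while the numerator $2(\lvert\widehat{\varrho}\rvert^{2} - \mathrm{Re}(\widehat{\varrho}))$ vanishes to the same order; hence $h$ stays bounded near $k^{}_{0}$ and extends continuously there. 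The extra $\varepsilon$ supplies the uniform error control needed to justify the tempered-distribution identification of $\widehat{\gamma}$ across $S$, the delicate technical steps being carried out in detail in \cite{BBG-BBM10}.
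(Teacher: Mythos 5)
Your route coincides with the paper's: the theorem is stated there without proof, but the derivation sketched in the surrounding text (and carried out in \cite{BBG-BBM10}) is exactly what you propose --- Fourier transform of $\gamma=\delta_0+\nu+\widetilde{\nu}$ via the renewal relation $(1-\widehat{\varrho}\,)\widehat{\nu}=\widehat{\varrho}$, giving the density $1+2\,\mathrm{Re}\bigl(\widehat{\varrho}/(1-\widehat{\varrho}\,)\bigr)=1-h$ off $S$, together with the lattice/non-lattice dichotomy for $S$ and the periodisation argument for $\widehat{\gamma}_{\mathsf{pp}}$. The algebraic identity, the identification of $S$, and the pure point part are all correct.

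The flaw is in your local analysis at $k_0\in S$. You assert that the numerator $2\bigl(\lvert\widehat{\varrho}\rvert^2-\mathrm{Re}(\widehat{\varrho})\bigr)$ vanishes ``to the same order'' as $\lvert 1-\widehat{\varrho}\rvert^2\sim(2\pi t)^2$, so that $h$ stays bounded and extends continuously across $S$. That is more than the $(1+\varepsilon)$-moment hypothesis gives, and it is false in general. Writing $z=\widehat{\varrho}(k_0+t)$, one has the cleaner identity $1-h=(1-\lvert z\rvert^2)/\lvert 1-z\rvert^2$, and $1-\lvert z\rvert^2=\int\bigl(1-\cos(2\pi tx)\bigr)\dd(\varrho*\widetilde{\varrho})(x)$ is only $O(\lvert t\rvert^{1+\varepsilon})$ under the stated hypothesis (using $1-\cos u\le C\lvert u\rvert^{1+\varepsilon}$ and the $(1+\varepsilon)$-moment of $\varrho*\widetilde{\varrho}$), whereas the denominator is of order $t^2$. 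The quotient is therefore only $O(\lvert t\rvert^{\varepsilon-1})$: locally integrable, but unbounded whenever $\varrho$ has infinite variance --- the diffuse intensity genuinely diverges at the points of $S$ for heavy-tailed waiting times. This is precisely why the theorem asserts only local integrability of $h$ and continuity away from $S$, not continuity on all of $\mathbb{R}$; your estimate should be weakened accordingly, with boundedness and a continuous extension at $S$ available only under a second-moment assumption. The genuinely delicate remaining point --- that $\widehat{\nu}+\widehat{\widetilde{\nu}\,}$ really decomposes as the stated pure point part plus the density $-h$, with no singular continuous remainder concentrated on $S$ --- you defer to \cite{BBG-BBM10}, which is also where the paper sends the reader; that is acceptable for a sketch, but it is there, not in the pointwise expansion, that the moment condition does its real work.
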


In one dimension, the renewal process allows an efficient 
derivation of the diffraction of random tilings, which we
briefly summarise now.

\section{Random Tilings}

The deterministic Fibonacci chain can be defined by the primitive
substitution rule $a \mapsto ab$, $b \mapsto a$, which defines
a strictly ergodic (discrete) hull. When $a$ and $b$ are replaced
by intervals of length $\tau = \frac{1}{2} (1 + \sqrt{5}\,)$ and
$1$, respectively, the left endpoints of the intervals define a
model set (or cut and project set). The corresponding Dirac comb
leads to the pure point diffraction measure
\[
    \widehat{\gamma^{}_{\mathrm{F}}} \, =  
    \sum_{k \in \frac{1}{\sqrt{5}\ts} \mathbb{Z} [\tau]}
    I (k) \, \delta^{}_{k}
\]
with intensities $I (k) = \bigl(\frac{\tau}{\sqrt{5}}\, \frac{\sin
  (\pi \tau k')}{\pi \tau k'}\bigr)^{2}$. Here, $\frac{\tau}{\sqrt{5}}
= \frac{\tau + 2}{5}$ is the density of the point set, and $k'$
denotes the algebraic conjugate of $k$, which is defined on the field
$\mathbb{Q} (\sqrt{5}\,)$ by $\sqrt{5} \mapsto - \sqrt{5}$ and acts as
the $\star$-map for the underlying model set description. In
particular, the diffraction is the same for all Dirac combs of the
Fibonacci hull; see \cite[Sec.~9.4.1]{BBG-TAO} and references therein
for details.  An illustration is shown in the upper panel of
Figure~\ref{BBG-fig:Fibo}.

\begin{figure}
\centerline{\includegraphics[width=0.825\textwidth]{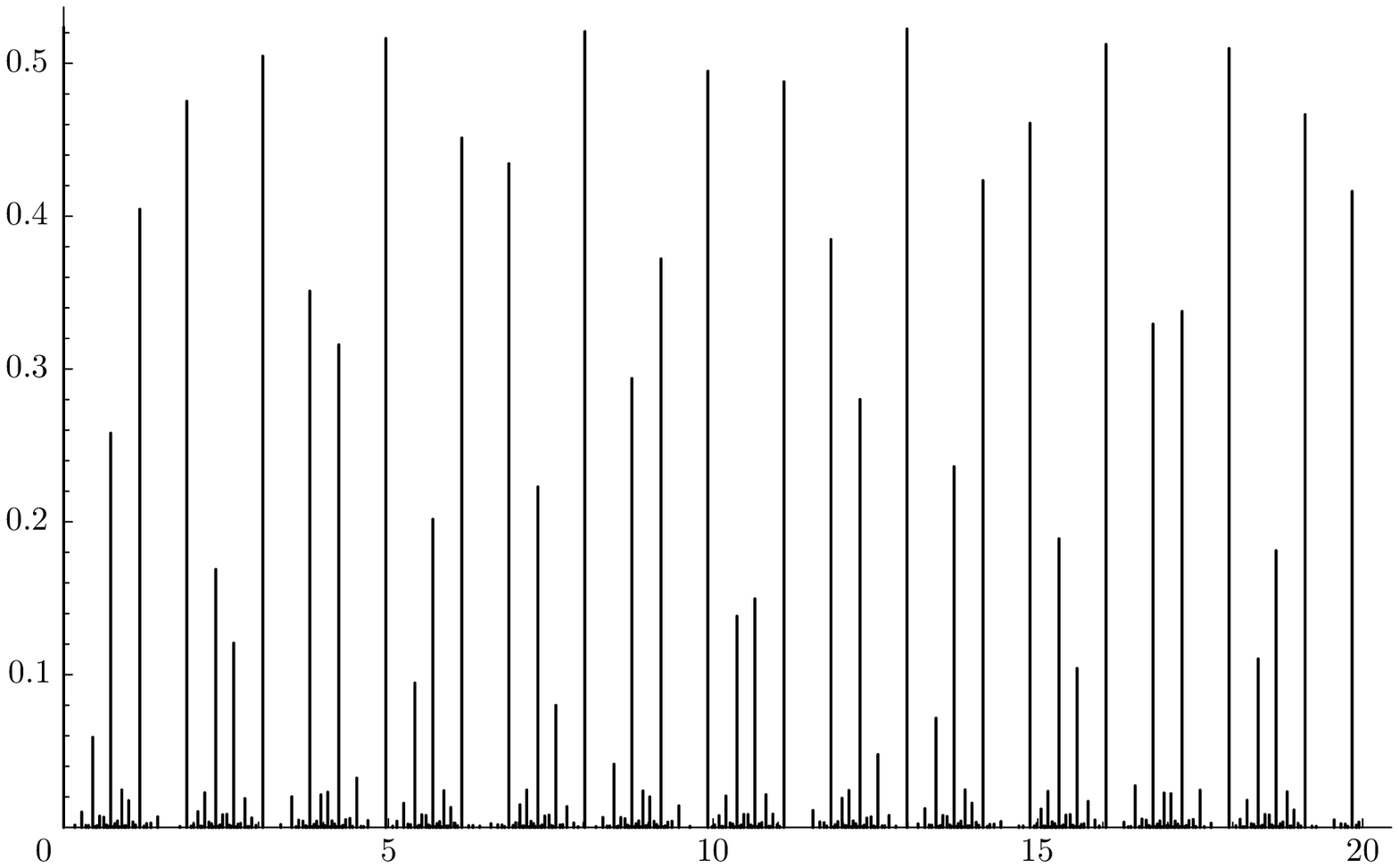}}
\bigskip
\centerline{\includegraphics[width=0.825\textwidth]{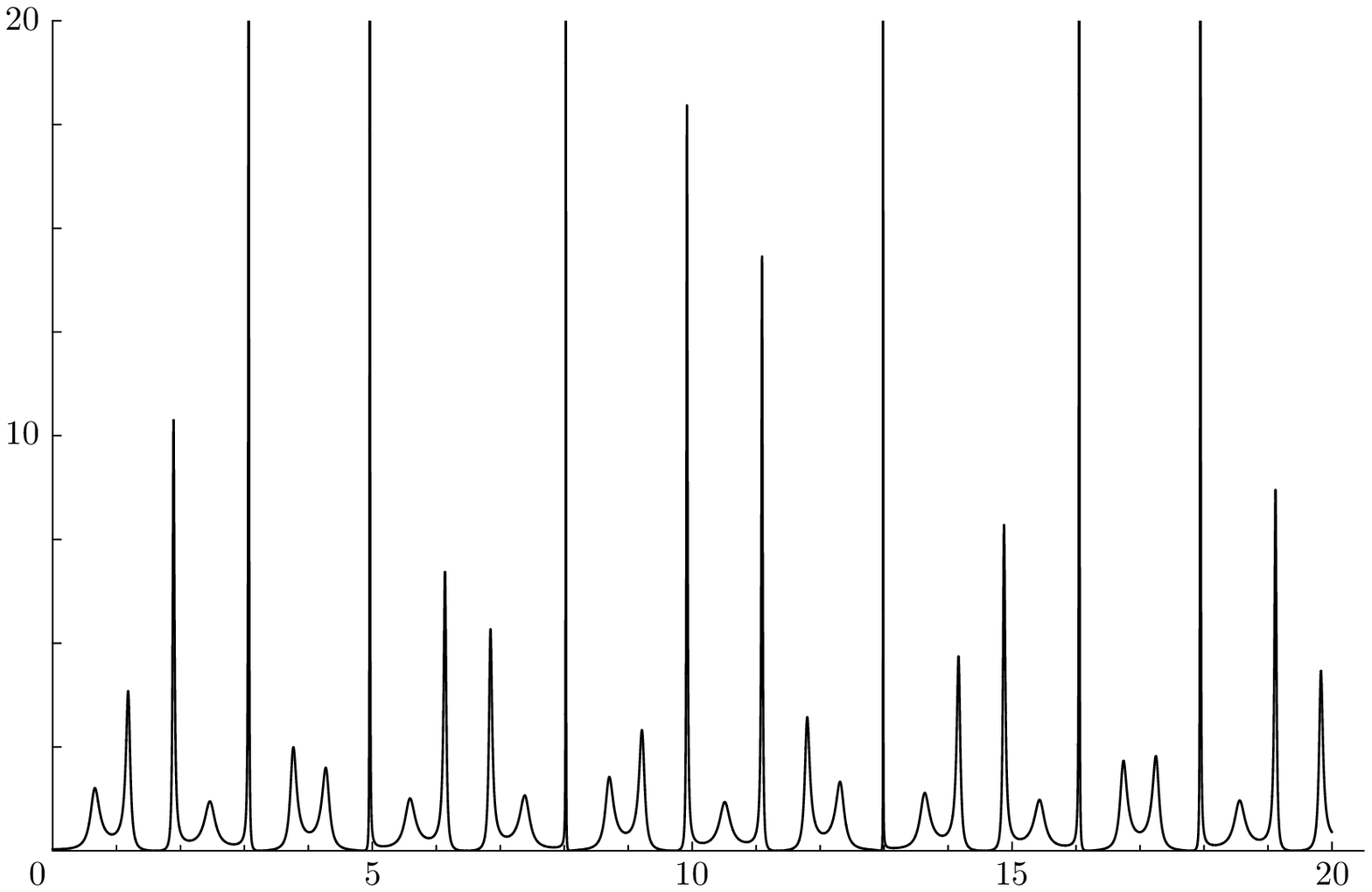}}
\caption{The pure point diffraction measure of the perfect Fibonacci
  chain (upper panel) and the absolutely continuous part of the
  corresponding random tiling (lower panel). Bragg peaks (in the upper
  picture) are shown as lines, where the height is the intensity,
  while the smooth Radon--Nikodym density in the lower picture is
  truncated at a value of $20$ to illustrate the spikyness.  The
  central peak (of intensity $\frac{\tau + 1}{5}$) is omitted in both
  diagrams.}
\label{BBG-fig:Fibo}
\end{figure}

The corresponding random tiling ensemble consists of all tilings of
the real line by the two types of intervals. For a direct comparison,
it makes more sense to only consider those tilings with the same
relative frequency of interval types, which means frequencies $1/\tau$
and $1/\tau^{2}$ for the long and the short interval, respectively.

The diffraction of a typical Dirac comb out of this class was
originally derived in \cite{BBG-BH}, but can also be obtained via an
application of the renewal structure from
Theorem~\ref{BBG-thm:renewal}. This leads to
\[
    \widehat{\gamma^{}_{\mathrm{rt}}} \, = \,
    \frac{\tau + 1}{5} \delta^{}_{0} + h \ts \lambda
\]
with the Radon--Nikodym density
\[
    h(k) \, = \, \frac{\tau + 2}{5} \,
    \frac{(\sin (\pi k/\tau))^{2}}
    {\tau^{2} (\sin (\pi k \tau))^{2} +
    \tau (\sin (\pi k) )^{2} -
    ( \sin (\pi k / \tau))^{2}} \ts .
\]
Except for the trivial Bragg peak at $k=0$, the diffraction measure is
thus absolutely continuous. Still, the resemblance between this
function and the diffraction of the perfect Fibonacci chain is
remarkable, as can be seen from Figure~\ref{BBG-fig:Fibo}.

The situation in dimensions $d\ge 2$ is less favourable from a
mathematical perspective, although one has a rather clear intuition of
what one should expect \cite{BBG-Henley,BBG-RichardDiss}, based on
solid scaling arguments. In dimensions $d\ge 3$, a mixed spectrum with
pure point and absolutely continuous components is conjectured, while
$d=2$ is the critical dimension in the sense that random tilings with
non-crystallographic symmetries should display a singular continuous
component; see \cite[Sec.~11.6.2]{BBG-TAO} for an example.

Unfortunately, only few results have been proved so far. Among them
are a rigorous treatment of planar random tiling ensembles with
crystallographic symmetries (such as the lozenge tiling and several
relatives, see \cite{BBG-BH,BBG-Moritz1,BBG-Moritz2}), a
group-theoretic approach to one of the random tiling hypotheses
\cite{BBG-Richard1,BBG-RichardDiss} and a treatment of dense Dirac
combs with pure point diffraction \cite{BBG-Richard2,BBG-LR07} that is
needed to understand the pure point part of the random tiling
diffraction in dimensions $d\ge 3$. The remaining questions are still
open, though there is little doubt that the original analysis from
\cite{BBG-Henley} is essentially correct.

Let us now leave the realm of explicit examples and turn our attention
to a more general approach of systems with randomness, formulated with
methods from the theory of point processes; compare
\cite{BBG-Gou1,BBG-LM09,BBG-LMpre,BBG-BBM10} for related aproaches and
results.

\section{Stochastic Point Processes and the Palm Measure}

In this section, we take the viewpoint of a general shift-invariant
random measure and relate its realisation-wise diffraction to its
second moment measure.  As such, this section is a complex-valued
extension of \cite[Sec.~5]{BBG-BBM10}.

Let $\mu = \mu^{}_\Re + \mathrm{i} \ts \mu^{}_\Im$ be a locally finite
complex-valued measure on $\mathbb{R}^d$ (which means that
$\mu^{}_\Re$ and $\mu^{}_\Im$ are both locally finite signed
measures). A short calculation reveals that, for $f\in C_{\mathsf{c}}
(\mathbb{R}^{d},\mathbb{C})$ of the form $f = g+ \mathrm{i} h$ with
real-valued $g$ and $h$, the measure $\widetilde{\mu}$ can
consistently be defined via
\[
    \widetilde{\mu}(f) \, := \, \overline{\mu(\widetilde{f}\,)}
    \, = \, \overline{\mu} \ts (f^{}_{-}) \, = \,
   \mu^{}_\Re(f^{}_{-}) -\mathrm{i} \ts \mu^{}_\Im(f^{}_{-}) \ts ,
\]
where $\widetilde{f}(x) = \overline{f^{}_{-}(x)}$ with
$f^{}_{-}(x)=f(-x)$. In particular, note that
\[
   \overline{\mu} \, = \, \mu^{}_\Re - \mathrm{i} \mu^{}_\Im
   \quad \text{and} \quad
   \overline{\mu(f)} \, = \, \overline{\mu}(\ts\overline{f}\ts)
\]
hold as expected. The point here is that, after having dealt 
with the case of real (or signed) measures, the extension to
complex measures is canonical and consistent.

To continue, recall the \emph{polar representation} of a complex
measure from \cite[Ch.~XIII.16]{BBG-D70}; see also
\cite[Prop.~8.3]{BBG-TAO}. Given $\mu$, there is a measurable function
$\alpha^{}_{\mu} \! : \, \mathbb{R}^d \to [0,2\pi)$ such that, for $f
\in C_{\mathsf{c}}(\mathbb{R}^d, \mathbb{C})$, one has
\[
   \int_{\mathbb{R}^d} f(x) \dd\mu(x) \, = 
   \int_{\mathbb{R}^d} f(x)\, \mathrm{e}^{\mathrm{i} \alpha^{}_{\nts\mu}(x)}
   \dd\lvert\mu\rvert (x),
\]
where $\lvert\mu\rvert$ is the total variation measure of $\mu$.  This
means that $\lvert\mu\rvert$ is the smallest non-negative measure such
that $|\mu(A)| \le \lvert\mu\rvert(A)$ for any bounded and measurable
$A$, where $\lvert\mu\rvert \le \lvert\mu_\Re\rvert +
\lvert\mu_\Im\rvert$; compare \cite[Sec.~8.5.1]{BBG-TAO} and
references therein.  \medskip

Let $\mathcal{M}$ denote the $\mathbb{C}$-vector space of all locally
finite, complex-valued measures $\phi$ on $\mathbb{R}^d$, so $\phi \in
\mathcal{M}$ means $|\phi(A)| < \infty$ for any bounded Borel set
$A$. A sequence $(\phi_n)_{n\in\mathbb{N}} \subset \mathcal{M}$
converges vaguely to $\phi$ if $\phi_n(f) \longrightarrow \phi(f)$ as
$n\to\infty$ for all $f \in C_{\mathsf{c}}(\mathbb{R}^d)$.  The space
$\mathcal{M}$ is closed in the topology of vague convergence of
measures (in fact, $\mathcal{M}$ is a Polish space with this
topology). We let $\varSigma_{\mathcal{M}}$ denote the
$\sigma$-algebra of Borel sets of $\mathcal{M}$.  The latter can be
described as the $\sigma$-algebra of subsets of $\mathcal{M}$
generated by the requirement that, for all bounded Borel sets
$A\subset\mathbb{R}^d$, the mapping $\phi\mapsto\phi(A)$ is
measurable.

For each $t\in\mathbb{R}^d$, let $T_t$ denote the translation operator
on $\mathbb{R}^d$, as defined by the mapping $x\mapsto t+x$. Clearly,
$T_t \ts T_s = T_{t+s}$, and the inverse of $T_t$ is given by
$T^{-1}_t = T^{}_{-t}$. For functions $f$ on $\mathbb{R}^d$, the
corresponding translation action is defined via $T_t f = f\circ
T_{-t}$, so that $(T_t f) (x) = f(x-t)$. Similarly, for $\phi \in
\mathcal{M}$, let $T_x \phi := \phi \circ T_{-x}$ be the image measure
under the translation, so that $(T_x \phi)(A) = \phi(T_{-x}(A)) =
\phi(A-x)$ for any measurable subset $A \subset \mathbb{R}^d$, and
$(T_{x} \phi) (f) = \int_{\mathbb{R}^d} f(y) \dd (T_x \phi) (y) =
\int_{\mathbb{R}^d} f(x+z) \dd \phi(z) = \phi (T_{-x} f)$ for
functions.  This means that there is a translation action of
$\mathbb{R}^d$ on $\mathcal{M}$.  Finally, we also have a translation
action on $\mathcal{P} (\mathcal{M})$, the probability measures on
$\mathcal{M}$, via $(T_x Q)(A) = Q(T_{-x}A)$ for any $A \in
\varSigma_{\mathcal{M}}$ and $Q\in\mathcal{P}(\mathcal{M})$. A set $A
\in \varSigma_{\mathcal{M}}$ is called \emph{invariant} (under
translations) if $T_{-x}A = A$ for all $x \in \mathbb{R}^d$.

A (complex-valued) \emph{random measure} $\varPhi$ is a random
variable (defined on some probability space $(\varTheta, \mathcal{F},
\pi)$) with values in $\mathcal{M}$, which formally means that
$\varPhi\! :\, \varTheta \longrightarrow \mathcal{M}$ is an
$(\mathcal{F}\!-\!\varSigma_{\mathcal{M}})$-measurable function. Its
distribution is then $Q = \pi \, \circ \, \varPhi^{-1} \in \mathcal{P}
(\mathcal{M})$, i.e.\ the image measure of $\pi$ under $\varPhi$.  We
will follow the usual practice in probability theory and not make the
underlying probability space explicit (a canonical choice can in
many cases simply be $\varTheta=\mathcal{M}$ and $\varPhi =
\mathrm{Id}_\mathcal{M}$). We will also usually suppress the dependence
of $\varPhi$ on $\theta \in \varTheta$ in the notation.  Integrals
over $\varTheta$ w.r.t.\ the probability measure $\pi$ will be denoted
by $\mathbb{E}$, the expectation value.

$\varPhi$ is called \emph{stationary} if its distribution $Q$
satisfies $T_xQ = Q$ for all $x \in \mathbb{R}^d$.  A stationary
random measure is called \emph{ergodic} if the shift-invariant
$\sigma$-algebra is trivial, which means that any invariant $A$ has
probability $0$ or $1$ (more generally, one requires $Q(A) \in
\{0,1\}$ whenever $Q\bigl((T_{-x}A) \, \triangle \, A\bigr) = 0$ for
all $x \in \mathbb{R}^d$; compare \cite[Def.~10.3.I and
Prop.~10.3.III]{BBG-DVJ}).  \smallskip

In what follows, we generally assume that 
\begin{align}\label{BBG-ass:phiergodic}
  \parbox{0.58\textwidth}{$\varPhi$ is a (possibly) complex-valued,
    stationary and ergodic random measure on $\mathbb{R}^d$,}
\end{align}
which means that there is a decomposition $\varPhi = \varPhi_\Re +
\mathrm{i}\ts \varPhi_\Im$ where both $\varPhi_\Re$ and $\varPhi_\Im$
are signed, real-valued, stationary, ergodic random measures on
$\mathbb{R}^d$.  To verify the last statement note that, since for any
bounded measurable $A \subset \mathbb{R}^d$, $\theta \mapsto
\varPhi(A) \, (=\varPhi(\theta, A)) \in \mathbb{C}$ is measurable,
also $\varPhi_\Re(A)$ and $\varPhi_\Im(A)$ are measurable as functions
of $\theta$. Consider any shift-invariant measurable $B \subset
\mathcal{M}_{\mathrm{real}}$ ($\mathcal{M}_{\mathrm{real}}$ denotes
the locally finite signed measures on $\mathbb{R}^d$), then $\{
\varPhi \mid \varPhi_\Re \in B \}$ is shift invariant and measurable as
well, so $\mathbb{P}(\varPhi_\Re \in B) \in \{0,1\}$, and
analogously for $\varPhi_\Im$.  We further assume that $\varPhi$ is
locally square integrable in the sense that
\begin{align}\label{BBG-ass:2ndmoments}
   \mathbb{E}\left[ \bigl(|\varPhi^{}_\Re|(A)\bigr)^2 + 
   \bigl(|\varPhi^{}_\Im|(A)\bigr)^2 \right]
   < \infty \qquad \text{for all bounded}\; A \subset \mathbb{R}^d,
\end{align}
where $|\varPhi^{}_\Re|$ and $|\varPhi^{}_\Im|$ denote the total
variation measures of $\varPhi^{}_\Re$ and $\varPhi^{}_\Im$,
respectively.  \medskip

In analogy with the real-valued case in \cite[Sec.~5.2]{BBG-BBM10}, we
define $\mu^{(2)}$, the second moment measure of $\varPhi$, via
\begin{equation}\label{BBG-def:mu2}
    \mu^{(2)}(A \times A') \, = \, 
   \mathbb{E}\big[ \varPhi(A)\ts \overline{\varPhi(A')} \ts\ts \big] 
   \qquad \text{for bounded} \;\; A, A' \in \mathcal{B}(\mathbb{R}^d), 
\end{equation}
hence, for $f \in C_{\mathsf{c}}(\mathbb{R}^d \times \mathbb{R}^d, \mathbb{C})$, 
\begin{equation*} 
    \mu^{(2)}(f) \, = \, 
   \mathbb{E}\left[ \int\nolimits_{\mathbb{R}^d} 
   \int\nolimits_{\mathbb{R}^d}  f(x,y)  
    \dd\varPhi(x) \dd\overline{\varPhi}(y) \right] .
\end{equation*}
By the shift invariance of the distribution of $\varPhi$, we have 
\begin{equation*} 
   \int_{\mathbb{R}^d \times \mathbb{R}^d} f(x,y) \dd\mu^{(2)}(x,y) \, = 
   \int_{\mathbb{R}^d \times \mathbb{R}^d} f(x+z,y+z) \dd\mu^{(2)}(x,y) 
\end{equation*}
for all $ z \in \mathbb{R}^d$, and hence we can factor out this
symmetry to obtain the reduced second moment measure
$\mu^{(2)}_\mathrm{red}$. The latter is a locally finite
complex-valued measure that is characterised by
\begin{align} 
\label{BBG-eq:mured:def1}
   \int_{\mathbb{R}^d \times \mathbb{R}^d} f(x,y) \dd\mu^{(2)}(x,y) \, = 
   \int_{\mathbb{R}^d} \int_{\mathbb{R}^d} f(u+v, u) \dd\mu^{(2)}_\mathrm{red}(v) 
   \dd\lambda(u) 
\end{align}
for $f \in C_{\mathsf{c}}(\mathbb{R}^d \times \mathbb{R}^d,
\mathbb{C})$.  By the shift invariance of Lebesgue measure on
$\mathbb{R}^d$, we equivalently have
\begin{align} 
\label{BBG-eq:mured:def1a}
   \int_{\mathbb{R}^d \times \mathbb{R}^d} f(x,y) \dd\mu^{(2)}(x,y) \, = 
   \int_{\mathbb{R}^d} \int_{\mathbb{R}^d} f(u, u-v) 
   \dd\mu^{(2)}_\mathrm{red}(v) \dd\lambda(u). 
\end{align}
To prove the existence of $\mu^{(2)}_\mathrm{red}$, one can decompose
$\mu^{(2)} = \mu^{(2)}_\Re + \mathrm{i} \mu^{(2)}_\Im$ into real and
imaginary parts and then use the well-known real-valued results
(compare \cite[Lemma~10.4.III]{BBG-DVJ}) to obtain
$\mu^{(2)}_\mathrm{red} =\mu^{(2)}_{\Re,\mathrm{red}} + \mathrm{i} \ts
\mu^{(2)}_{\Im,\mathrm{red}}$.

Note that $\mu^{(2)}_\mathrm{red}$ is uniquely defined and is
a positive definite measure, since 
\begin{align} 
\label{BBG-eq:mu2andmured}
  \mu^{(2)} \big( {f} \otimes \overline{g} \big)\, & = 
  \int_{\mathbb{R}^d} \int_{\mathbb{R}^d} f(u+v)\, \overline{g (u)} 
  \dd\mu^{(2)}_\mathrm{red}(v) \dd\lambda(u) \notag \\[1mm]
  & = \int_{\mathbb{R}^d} \int_{\mathbb{R}^d} f(v-w) \,
    \overline{g (-w)} \dd\lambda(w) 
  \dd\mu^{(2)}_\mathrm{red}(v)  \\[1mm]
  & = \int_{\mathbb{R}^d} \big({f} * \widetilde{g} \big)(v) 
  \dd\mu^{(2)}_\mathrm{red}(v) 
  \, = \, \mu^{(2)}_\mathrm{red}\big(f * \widetilde{g}\ts\big), \notag
\end{align}
so that
\begin{align*} 
   \mu^{(2)}_\mathrm{red}\big(f * \widetilde{f} \, \big) 
   \, & = \, \mu^{(2)} \big( {f} \otimes \overline{f} \ts\big) 
   \, = \, \mathbb{E}\Big[ {\mbox{\Large $\int$} f \dd\varPhi \, 
    {\mbox{\Large $\int$}}\ts \overline{f} 
   \dd\overline{\varPhi} }\, \Big] \\[1mm]
  & = \, \mathbb{E}\left[ {{\mbox{\Large $\int$}} {f} \dd\varPhi \, 
  \overline{ {\mbox{\Large $\int$}} {f} \dd\varPhi} \, } \right] 
   \, = \, \mathbb{E}\left[ \lvert \varPhi(f) \rvert^2 
   \right] \ge 0.
\end{align*}

\begin{rem}[{see also \cite[Rem.~13]{BBG-BBM10}}]
  One can alternatively define
\begin{equation*} 
   \mu^{(2,\mathrm{alt})}(A \times A') \, = \, 
   \mathbb{E}\big[\, \overline{\varPhi(A)} \varPhi(A')\ts \big] 
   \qquad \text{for bounded} \;\; A, A' \in 
   \mathcal{B}(\mathbb{R}^d), 
\end{equation*}
and then obtain $\mu^{(2,\mathrm{alt})}_\mathrm{red}$ from this 
as above, via 
\begin{align*}
   \int_{\mathbb{R}^{d}\times\mathbb{R}^{d}} f(x,y) 
   \dd\mu^{(2,\mathrm{alt})}(x,y)\, & = 
   \int_{\mathbb{R}^{d}} \int_{\mathbb{R}^{d}} f(u+v, u) 
   \dd\mu^{(2,\mathrm{alt})}_\mathrm{red}(v) \dd\lambda(u) \\[1mm]
   & = \int_{\mathbb{R}^{d}} \int_{\mathbb{R}^{d}} f(u, u-v) 
  \dd\mu^{(2,\mathrm{alt})}_\mathrm{red}(v) \dd\lambda(u) .
\end{align*}
Then, we have $\mu^{(2,\mathrm{alt})} = \overline{\mu^{(2)}}$ and
$\mu^{(2,\mathrm{alt})}_\mathrm{red} = \overline{\mu^{(2)}_\mathrm{red}}$. 
Since
\begin{equation*}
   \int_{\mathbb{R}^{d}\nts\times\mathbb{R}^{d}} f(y,x)
   \dd\mu^{(2)}(x,y)
  \,  = \int_{\mathbb{R}^{d}\nts\times\mathbb{R}^{d}} f(x,y)
   \dd\overline{\mu^{(2)}}(x,y) 
    \, = \int_{\mathbb{R}^{d}\nts\times\mathbb{R}^{d}} f(x,y)
    \dd\mu^{(2,\mathrm{alt})}(x,y)\ts ,
\end{equation*}
we see that the alternative choice of factoring out the shift
invariance in Eq.~\eqref{BBG-eq:mured:def1}, namely integrating $f(u,
u+v)$ on the right-hand side of this equation, leads to
$\mu^{(2,\mathrm{alt})}_\mathrm{red}$, where
\begin{equation} 
   \int_{\mathbb{R}^d \times \mathbb{R}^d} f(x,y) \dd\mu^{(2)}(x,y) 
   \, = \int_{\mathbb{R}^d} \int_{\mathbb{R}^d} f(u, u+v) 
   \dd\mu^{(2,\mathrm{alt})}_\mathrm{red}(v) \dd\lambda(u) \, .
\end{equation}
We choose the definitions as in Eqs.~\eqref{BBG-def:mu2} and
\eqref{BBG-eq:mured:def1} because these fit well to the formulation of
the limit in Eq.~\eqref{BBG-eq:thmstochprocautocorr} below. Note that,
in the real-valued case, $\mu^{(2)}_\mathrm{red}$ and
$\mu^{(2,\mathrm{alt})}_\mathrm{red}$ agree.
\end{rem}  
\smallskip

The `complex-valued' analogue of \cite[Thm.~5]{BBG-BBM10} now reads as
follows.
\begin{thm}\label{BBG-thm:complrm} 
  Assume that conditions\/ \eqref{BBG-ass:phiergodic} and\/
  \eqref{BBG-ass:2ndmoments} are satisfied, and let\/ $\varPhi_n :=
  \varPhi|^{}_{B_n}$ denote the restriction of\/ $\varPhi$ to the open
  ball of radius\/ $n$ around\/ $0$.  Then, the natural
  autocorrelation of\/ $\varPhi$, which is defined with an averaging
  sequence of nested, centred balls, almost surely exists and
  satisfies
\begin{equation} 
  \label{BBG-eq:thmstochprocautocorr}
   \gamma^{(\varPhi)} \, := \,
   \lim_{n\to\infty} \, \frac{\varPhi_n \! * 
   \widetilde{\varPhi_n}} {\lambda(B_n)} \; = \;
   \lim_{n\to\infty}\, \frac{\varPhi_n \! * 
   \widetilde{\varPhi}} {\lambda(B_n)}
   \; = \;  \mu^{(2)}_\mathrm{red}\ts ,
\end{equation}
where the limit refers to the vague topology. In particular, 
the autocorrelation is non-random. 
\end{thm}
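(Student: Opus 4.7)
The plan is to verify the vague convergence by testing against an arbitrary $g \in C_{\mathsf{c}}(\mathbb{R}^d,\mathbb{C})$ with $\mathrm{supp}(g) \subset B_r(0)$. Unfolding the Eberlein convolution,
\begin{equation*}
   I_n(g) \, := \, \frac{(\varPhi_n * \widetilde{\varPhi_n})(g)}{\lambda(B_n)} \, = \, \frac{1}{\lambda(B_n)} \int_{B_n} \int_{B_n} g(x-y) \dd\varPhi(x) \dd\overline{\varPhi}(y) \ts .
\end{equation*}
The main task is to show $I_n(g) \to \mu^{(2)}_\mathrm{red}(g)$ almost surely; the equality of the two limits in \eqref{BBG-eq:thmstochprocautocorr} then follows from a standard boundary estimate.

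For the expectation, Fubini (legitimate under \eqref{BBG-ass:2ndmoments}) and the definition \eqref{BBG-def:mu2} give $\mathbb{E}[I_n(g)] = \lambda(B_n)^{-1} \iint_{B_n \times B_n} g(x-y)\dd\mu^{(2)}(x,y)$. Applying \eqref{BBG-eq:mured:def1a} to $f(x,y) = g(x-y)\mathbf{1}_{B_n}(x)\mathbf{1}_{B_n}(y)$ rewrites this as $\int g(v)\,\lambda(B_n \cap (B_n + v))/\lambda(B_n)\dd\mu^{(2)}_\mathrm{red}(v)$, and compactness of $\mathrm{supp}(g)$ together with the van Hove property of balls yields $\mathbb{E}[I_n(g)] \to \mu^{(2)}_\mathrm{red}(g)$.

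To promote this to almost sure convergence, I would decompose $\varPhi = \varPhi^{}_\Re + \mathrm{i}\ts\varPhi^{}_\Im$. By the remarks following \eqref{BBG-ass:phiergodic} both parts are stationary and ergodic, and in fact jointly so, since any joint-shift invariant subset of $\mathcal{M}_{\mathrm{real}}\times\mathcal{M}_{\mathrm{real}}$ pulls back to a shift-invariant subset of $\mathcal{M}$. Using $\widetilde{\mathrm{i}\ts\varPhi^{}_\Im} = -\mathrm{i}\ts\widetilde{\varPhi^{}_\Im}$, the convolution expands as
\begin{equation*}
   \varPhi_n * \widetilde{\varPhi_n} \, = \, \varPhi^{}_{\Re,n}*\widetilde{\varPhi^{}_{\Re,n}} + \varPhi^{}_{\Im,n}*\widetilde{\varPhi^{}_{\Im,n}} + \mathrm{i}\bigl(\varPhi^{}_{\Im,n}*\widetilde{\varPhi^{}_{\Re,n}} - \varPhi^{}_{\Re,n}*\widetilde{\varPhi^{}_{\Im,n}}\bigr)\ts .
\end{equation*}
The two autocorrelations converge almost surely in the vague topology by \cite[Thm.~5]{BBG-BBM10}. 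The symmetric part $\varPhi^{}_{\Re,n}*\widetilde{\varPhi^{}_{\Im,n}} + \varPhi^{}_{\Im,n}*\widetilde{\varPhi^{}_{\Re,n}}$ of the cross-correlations is extracted by polarisation from the autocorrelations of $\varPhi^{}_\Re \pm \varPhi^{}_\Im$, which are themselves stationary and ergodic. The antisymmetric combination that actually appears above requires a separate step: I would adapt the Wiener-type multidimensional ergodic theorem underlying \cite[Thm.~5]{BBG-BBM10} directly to the bilinear spatial average $\iint g(x-y)\dd\varPhi^{}_\Re(x)\dd\widetilde{\varPhi^{}_\Im}(y)$, using joint ergodicity of $(\varPhi^{}_\Re,\varPhi^{}_\Im)$. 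Combining the four limits and invoking $\mu^{(2)}_\mathrm{red} = \mu^{(2)}_{\Re,\mathrm{red}} + \mathrm{i}\ts\mu^{(2)}_{\Im,\mathrm{red}}$ then identifies the almost sure limit as $\mu^{(2)}_\mathrm{red}(g)$.

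The second equality in \eqref{BBG-eq:thmstochprocautocorr} follows from the boundary observation that $(\varPhi_n*\widetilde{\varPhi} - \varPhi_n*\widetilde{\varPhi_n})(g)$ is supported on pairs $(x,y) \in B_n \times (\mathbb{R}^d \setminus B_n)$ with $|x-y|\le r$, hence on a shell of volume $O(\lambda(B_n\setminus B_{n-r})) = o(\lambda(B_n))$; Cauchy--Schwarz using \eqref{BBG-ass:2ndmoments} bounds its expected modulus by $o(\lambda(B_n))$. The main obstacle I anticipate is the antisymmetric cross-correlation step, since polarisation alone delivers only the symmetric combination of the two cross-correlations; the Palm-type second moment measure developed in the next section of the paper provides a cleaner route that bypasses this issue and handles the translation invariance at the level of measures rather than at the level of expansions.
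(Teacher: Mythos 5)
Your overall architecture (test against $g\in C_{\mathsf{c}}$, compute the expectation via \eqref{BBG-def:mu2} and \eqref{BBG-eq:mured:def1a}, upgrade to almost sure convergence by ergodicity, control the boundary term) is sound, but the one step you flag as "requiring a separate step" --- the antisymmetric cross-correlation $\varPhi^{}_{\Im,n}*\widetilde{\varPhi^{}_{\Re,n}} - \varPhi^{}_{\Re,n}*\widetilde{\varPhi^{}_{\Im,n}}$ --- is precisely where the entire content of the proof lies, and it is left unproved. You correctly observe that polarisation over real linear combinations $\varPhi^{}_{\Re}+c\ts\varPhi^{}_{\Im}$ can only ever produce the \emph{symmetric} sum of the two cross-correlations, so the imaginary part of $\gamma^{(\varPhi)}$ is not reachable by citing the real-valued \cite[Thm.~5]{BBG-BBM10} alone. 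The paper sidesteps the four-term expansion entirely: it sets $F(x)=\int f(x-y)\dd\overline{\varPhi}(y)$, forms the additive covariant spatial process $X(A)=\int_A F(x)\dd\varPhi(x)$ in the sense of \cite{BBG-NZ79}, and decomposes $X$ (not the convolution) into real and imaginary parts; each of the four resulting real processes $\int_A F^{}_{\Re}\dd\varPhi^{}_{\Re}$, $\int_A F^{}_{\Im}\dd\varPhi^{}_{\Im}$, etc.\ is additive, covariant and integrable by \eqref{BBG-ass:2ndmoments}, so \cite[Cor.~4.9]{BBG-NZ79} applies to each on the same footing --- diagonal and cross terms alike --- with no polarisation needed. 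This spatial ergodic theorem for covariant additive processes \emph{is} the "adaptation of the Wiener-type ergodic theorem to the bilinear average" that you defer; without invoking it (or reproving it), your argument does not close.

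A secondary point: your treatment of the boundary term $\varPhi_n*\widetilde{\varPhi}-\varPhi_n*\widetilde{\varPhi_n}$ bounds its \emph{expected} modulus by $o(\lambda(B_n))$ via Cauchy--Schwarz, which yields only $L^1$ (hence in-probability) convergence to zero, not the almost sure convergence the theorem asserts; without a summable rate, Borel--Cantelli does not apply. The paper instead decomposes into real and imaginary parts and argues pathwise as in \cite[Thm.~3]{BBG-BBM10}, essentially by expressing the shell contribution as a difference of two volume averages that converge a.s.\ to the same limit by the same spatial ergodic theorem. Your computation of $\mathbb{E}[I_n(g)]$ and the van Hove argument for $\lambda(B_n\cap(B_n+v))/\lambda(B_n)\to 1$ are correct and consistent with the identity \eqref{BBG-eq:mu2andmured} used in the paper.
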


\begin{proof}
  The proof is a suitable `complex-valued interpretation' of the proof
  of \cite[Thm.~5]{BBG-BBM10}.  Fix a continuous function $f\! :\,
  \mathbb{R}^d \longrightarrow \mathbb{C}$ with compact support. We
  have to check that
\begin{equation} 
\label{BBG-claim1}
   \frac{1}{\lambda(B_n)} \bigl( \varPhi_n \! * 
   \widetilde{\varPhi_n}\,\bigr)(f)\; \xrightarrow{\, n\to\infty\,} 
   \; \mu^{(2)}_\mathrm{red} (f)
   \qquad \mbox{(a.s.)}. 
\end{equation}
Since both sides are locally finite (complex-valued) measures, it
actually suffices to check Eq.~\eqref{BBG-claim1} for real-valued $f$.
For $x \in \mathbb{R}^d$, define
\[
   F(x) \, := \int_{\mathbb{R}^d} f(x-y)
  \dd\overline{\varPhi}(y) \, = \int_{\mathbb{R}^d} f(x+y)
  \dd\widetilde{\varPhi}(y) \ts .
\]
Clearly, $F$ inherits stationarity and
ergodicity from $\varPhi$, wherefore $F$ is a (complex-valued) ergodic
random function on $\mathbb{R}^d$ in the sense that shift-invariant
events for $F$ have `trivial' probabilities ($0$ or $1$), and we
obtain
\[
   \mathbb{E} \left[ \int_{A} \big\lvert F(x) \big\rvert \dd 
   \lvert \varPhi \rvert (x) \right]
   \, < \, \infty 
\]
for any bounded and measurable $A \subset \mathbb{R}^d$. 

Define a (complex-valued) additive covariant spatial process $X(A)$ in
the sense of \cite{BBG-NZ79}, indexed by a bounded and measurable $A
\subset \mathbb{R}^d$, via
\begin{equation*} 
   X(A) \, := \int_A F(x) \dd\varPhi(x)\, . 
\end{equation*}
Covariant in this context means that $X$ behaves `naturally' under
translations:\ When $\mathbb{R}^d$ acts on $X$ via $(T_uX)(A) :=
\int_A F(x) \dd (T_u\varPhi)(x)$, for $u\in\mathbb{R}^{d}$, then 
$(T_uX)(A+u)=X(A)$.

Decomposing $X$ into its real and imaginary parts (by decomposing $F$
and $\varPhi$ and suitably grouping terms) we can apply
\cite[Cor.~4.9]{BBG-NZ79} to obtain a.s.\
\begin{align*} 
  \lim_{n\to\infty} & \frac{1}{\lambda(B_n)} 
     \bigl( \varPhi_n \nts * \widetilde{\varPhi} \ts\ts \bigr) (f)
  \, =  \lim_{n\to\infty} \frac{1}{\lambda(B_n)} 
     \int_{B_n}\!\! F(x) \dd\varPhi(x) 
  \, = \lim_{n\to\infty} \frac{X(B_{n})}{\lambda(B_n)}  \\
  & = \, \mathbb{E} \left[ \frac{ X({B_1})}{\lambda(B_1)} \right] 
  \, = \, \frac{1}{\lambda(B_1)} \,
   \mathbb{E} \left[ \int_{B^{}_1} \int_{\mathbb{R}^d} f(x-y) 
   \dd\overline{\varPhi}(y) \dd\varPhi(x)\right]  \\
  & =\, \frac{1}{\lambda(B_1)} \int_{\mathbb{R}^d \times\ts \mathbb{R}^d} 
       \mathbf{1}_{B^{}_1}(x) \, f(x-y) \dd\mu^{(2)}(x,y) \\[1mm]
  & =\, \frac{1}{\lambda(B_1)} 
        \int_{\mathbb{R}^d} \int_{\mathbb{R}^d}\mathbf{1}_{B^{}_1}(x) \, f(z) 
        \dd\mu^{(2)}_\mathrm{red}(z) \dd \lambda(x) 
   \, =   \int_{\mathbb{R}^d} f \dd\mu^{(2)}_\mathrm{red}\ts .
\end{align*}
The difference between $\varPhi_n * \widetilde{\varPhi}$ and
$\varPhi_n * \widetilde{\varPhi_n}$ is a (random) `boundary term' that
almost surely vanishes in the limit as $n\to\infty$. To prove this
formally, decompose $\varPhi = \varPhi_\Re + \mathrm{i} \ts
\varPhi_\Im$, $\widetilde{\varPhi} = \widetilde{\varPhi_\Re} -
\mathrm{i} \ts \widetilde{\varPhi_\Im}$ and then argue as in the proof
of \cite[Thm.~3]{BBG-BBM10} for each of the four terms appearing in
$\varPhi_n * \bigl( \widetilde{\varPhi} -
\widetilde{\varPhi_n}\bigr)$.
\end{proof}

\begin{rem}
  Theorem~\ref{BBG-thm:complrm} allows to reformulate Theorem~4 and
  Corollary~1 from \cite{BBG-BBM10} for complex-valued clusters as
  follows. If $\varPhi$ is a stationary ergodic point process, i.e.\
  $\varPhi$ is a random sum of Dirac measures, with distribution $P$
  satisfying Eq.~\eqref{BBG-ass:2ndmoments}, and if we replace each point
  independently by a random complex-valued measure with distribution
  $Q$, then the formulas describing the autocorrelation and the
  diffraction of the resulting cluster process given in \cite[Thm.~4
  and Cor.~1]{BBG-BBM10} continue to hold.
\end{rem}

Let us also mention that, by specialising $\varPhi$ to a renewal
process, Theorem~\ref{BBG-thm:complrm} allows to recover
Eq.~\eqref{BBG-eq:auto-1} and, in particular,
Theorem~\ref{BBG-thm:renewal} from this more general perspective; see
\cite{BBG-BBM10} for further details, and how this can be used to
formulate the renewal process also for more general `dropping'
distributions.

\subsection{A `Palm-type Distribution' for Complex-valued 
Random Measures} 

In the case of a positive random measure $\varPhi$,
Eq.~\eqref{BBG-eq:thmstochprocautocorr} can be interpreted via the Palm
distribution $P_0$ of the law of $\varPhi$, which is a probability
measure on locally finite measures (intuitively, the law of $\varPhi$
viewed relative to a typical point of its support) via
\begin{equation} 
  \label{BBG-eq:mu2redandpalmintensity}
  \mu^{(2)}_\mathrm{red} \, = \, \rho \, I_{P_0}
\end{equation}
where $\rho > 0$ is the intensity and $I_{P_0}$ the first moment
measure of $P_0$; compare \cite[Sec.~5.2]{BBG-BBM10}.  This
interpretation breaks down in general in the signed or complex-valued
case because $\mu^{(2)}_\mathrm{red}$ will not be a positive
measure. One way to extend this line of thought is to re-interpret the
Palm distributions in a way suited for complex-valued random measures
as follows.

Recalling the structure of the polar decomposition, the random measure
$\varPhi$ can equivalently be described via $(\lvert \varPhi \rvert,
\varPhi_{\rm ph})$, where $\lvert \varPhi \rvert$ is the total
variation measure and the mapping $\varPhi_{\rm ph} \!:\, \mathbb{R}^d
\longrightarrow [0,2\pi)$ the `phase function':
\begin{equation} 
  \int_{\mathbb{R}^d} f(x)  \dd\varPhi(x) \, = 
  \int_{\mathbb{R}^d} f(x) \, \mathrm{e}^{\mathrm{i} \varPhi_{\rm ph}(x)}  
   \dd\lvert \varPhi \rvert(x)\ts . 
\end{equation}
Note that $\varPhi \mapsto (\lvert \varPhi \rvert, \varPhi_{\rm ph})$
is measurable, so $(\lvert \varPhi \rvert, \varPhi_{\rm ph})$ is in
fact a random variable.  Define a positive $\sigma$-finite measure
$\mathcal{C}$ on $\mathbb{R}^d \times \mathcal{M}$ (this is the
equivalent of the so-called Campbell measure for the complex-valued
context and agrees with the usual Campbell measure if $\varPhi$ is a
positive random measure) via
\[ 
   \int_{\mathbb{R}^d \times \mathcal{M}} g(x, \varphi)
    \dd \mathcal{C}(x,\varphi) \, := \, 
    \mathbb{E}\Big[ \int_{\mathbb{R}^d} 
    g\big(x, \mathrm{e}^{-\mathrm{i}\varPhi_{\rm ph}(x)}
    \varPhi\big) \dd\lvert \varPhi \rvert(x) \Big] , 
\]
whenever the right-hand side is defined (which will for instance
always be the case when $g$ is measurable and non-negative). By the
shift invariance of $\varPhi$, and hence that of $\lvert \varPhi
\rvert$, the projection of $\mathcal{C}$ to $\mathbb{R}^d$ is $\rho$
times Lebesgue measure (with $\rho \in [0,\infty)$ being the intensity
of $\lvert \varPhi \rvert$), hence there is a family of probability
measures $P_x$ on $\mathcal{M}$, with $P_x \in
\mathcal{P}(\mathcal{M})$ for all $x \in \mathbb{R}^d$, so that we can
disintegrate (compare \cite[Thm.~15.3.3]{BBG-Ka})
\begin{equation} 
   \int g \dd\mathcal{C} \, =  
   \int_{\mathbb{R}^d} \int_{\mathcal{P}(\mathcal{M})} g(x, \varphi) \, 
   \dd P_x(\varphi) \, \rho\dd\lambda(x)\ts . 
\end{equation} 

\begin{defn} 
  We call the elements of the family $\bigl\{P_x \mid x\in
  \mathbb{R}^d\bigr\}$ the \emph{Palm distributions} in the
  complex-valued case.
\end{defn}
Let, for $A \subset \mathbb{R}^d $ bounded and measurable,
\[
   I_{P_x} (A) \, := \int_{\mathcal{M}} \varphi(A) \dd P_x(\varphi)
 \]
 be the expectation (or first moment) measure of $P_x$.  By shift
 invariance, we have $P_x = T_x P^{}_{0}$, $x \in \mathbb{R}^d$, and
 hence $I_{P_x} = T_x I_{P^{}_{0}}$.  The connection between the
 (reduced) second moment measure and the Palm distribution carries
 over to the complex-valued case as follows.
\begin{prop} 
  For the extended definition of the Palm distribution, one has
\[ 
     \mu^{(2)}_{\mathrm{red}} \, =\, \rho \, I^{}_{\nts P^{}_{0}} \ts, 
\] 
so Eq.~\eqref{BBG-eq:mu2redandpalmintensity} also holds in this case.
\end{prop}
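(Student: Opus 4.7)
The plan is to compute $\mu^{(2)}(f\otimes\overline{g})$ for test functions $f,g\in C_{\mathsf{c}}(\mathbb{R}^d,\mathbb{C})$ in two different ways and match the results. On the one hand, the derivation already carried out in Eq.~\eqref{BBG-eq:mu2andmured} yields
$\mu^{(2)}(f\otimes\overline{g})=\mu^{(2)}_\mathrm{red}(f*\widetilde{g})$.
On the other hand, I will show via the polar decomposition and the disintegration defining $P_0$ that
$\mu^{(2)}(f\otimes\overline{g})=\rho\,I_{P^{}_{0}}(f*\widetilde{g})$.
Since convolutions $f*\widetilde{g}$ are vaguely dense in $C_{\mathsf{c}}(\mathbb{R}^d,\mathbb{C})$ (use approximate identities for $g$) and both sides define locally finite measures, the identity $\mu^{(2)}_\mathrm{red}=\rho\,I_{P^{}_{0}}$ will follow.

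For the first step, I expand the left-hand side via $\dd\varPhi(y)=\mathrm{e}^{\mathrm{i}\varPhi_{\rm ph}(y)}\dd\lvert\varPhi\rvert(y)$ and observe that the factor $\mathrm{e}^{-\mathrm{i}\varPhi_{\rm ph}(x)}$ can be pulled inside $\varPhi(f)$:
\[
   \mathbb{E}\bigl[\varPhi(f)\,\overline{\varPhi(g)}\bigr]
   \,=\,\mathbb{E}\Bigl[\int_{\mathbb{R}^d}\overline{g}(x)\,\bigl(\mathrm{e}^{-\mathrm{i}\varPhi_{\rm ph}(x)}\varPhi\bigr)(f)\dd\lvert\varPhi\rvert(x)\Bigr].
\]
The integrand is precisely $G(x,\varphi):=\overline{g}(x)\,\varphi(f)$ evaluated at $\bigl(x,\mathrm{e}^{-\mathrm{i}\varPhi_{\rm ph}(x)}\varPhi\bigr)$, so by the defining property of the Campbell-type measure $\mathcal{C}$ and then its disintegration,
\[
   \mu^{(2)}(f\otimes\overline{g})\,=\int_{\mathbb{R}^d\times\mathcal{M}} G\dd\mathcal{C}
   \,=\,\rho\int_{\mathbb{R}^d}\overline{g}(x)\,I_{P^{}_x}(f)\dd\lambda(x).
\]
Using $P^{}_x=T^{}_x P^{}_{0}$ gives $I^{}_{P^{}_{x}}(f)=\int f(x+z)\dd I^{}_{P^{}_{0}}(z)$, and the substitution $u=x$, $v=z$ together with $(f*\widetilde{g})(v)=\int f(v+u)\,\overline{g}(u)\dd\lambda(u)$ rewrites the double integral as $\rho\,I_{P^{}_{0}}(f*\widetilde{g})$, as claimed.

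The main technical obstacle is ensuring that every swap of integrals and every application of the Campbell disintegration is legitimate, since $\varphi$ now ranges over complex-valued measures and $\varphi(f)$ need not be $P^{}_x$-integrable a priori. This is where Eq.~\eqref{BBG-ass:2ndmoments} enters: restricting to $g$ with compact support, Cauchy--Schwarz bounds the relevant quantity by
$\mathbb{E}\bigl[\lvert\varPhi(f)\rvert\cdot\lvert\varPhi\rvert(\mathrm{supp}\,g)\bigr]\le\mathbb{E}\bigl[\lvert\varPhi\rvert(\mathrm{supp}\,f)^{2}\bigr]^{1/2}\mathbb{E}\bigl[\lvert\varPhi\rvert(\mathrm{supp}\,g)^{2}\bigr]^{1/2}<\infty$,
so all interchanges via Fubini are justified on the bounded regions that test functions see. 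Once this bookkeeping is handled, the chain of equalities above is a direct (if slightly tedious) calculation, and the proposition follows.
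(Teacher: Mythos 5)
Your proposal is correct and follows essentially the same route as the paper's own (sketch of) proof: you integrate a product-type test function against the Campbell-type measure $\mathcal{C}$, evaluate once by its definition (recovering $\mu^{(2)}(f\otimes\overline{g}\ts)$ via the polar decomposition) and once by the disintegration (yielding $\rho\, I_{P_0}(f*\widetilde{g}\ts)$), and then compare with Eq.~\eqref{BBG-eq:mu2andmured}. The only differences are cosmetic --- you use continuous test functions $f,g$ where the paper uses indicators $\mathbf{1}_{A}, \mathbf{1}_{A'}$, and you spell out the integrability bookkeeping that the paper's sketch leaves implicit.
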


\begin{proof}[Sketch of Proof.]
Consider $g(x, \varphi) =
\mathbf{1}_{A'}(x) \ts\varphi(A)$ with $\mathbf{1}$ denoting the
characteristic function and with $A, A' \subset \mathbb{R}^d$ bounded
and measurable.  Then,
\begin{align} 
\label{BBG-eq:complcampbell1}
   \int g \dd\mathcal{C} & \, = \,
   \mathbb{E}\Big[ \int_{\mathbb{R}^d}  
   \mathbf{1}_{A'}(x) \, \mathrm{e}^{-\mathrm{i}\varPhi_{\rm ph}(x)} 
   \,\varPhi(A) \dd\lvert \varPhi \rvert(x) \Big]  \\[1mm]
  & = \, \mathbb{E}\Big[ \varPhi(A) \int_{\mathbb{R}^d}  
   \mathbf{1}_{A'}(x)\, \mathrm{e}^{-\mathrm{i}\varPhi_{\rm ph}(x)}  
   \dd\lvert \varPhi \rvert(x) \Big] 
   \, = \, \mathbb{E}\big[ \varPhi(A)\ts \overline{\varPhi(A')} 
  \, \big] \notag
\end{align}
by definition, whereas the disintegration formula yields 
\begin{align} \label{BBG-eq:complcampbell2}
  \int g \dd\mathcal{C} & \, = 
  \int_{\mathbb{R}^d} \int_{\mathcal{M}} \mathbf{1}_{A'}(x) \varphi(A) \, 
  \dd P_x(\varphi)\, \rho\dd\lambda(x) 
   \, = \, \rho \int_{A'} I_{P_x}(A) \dd\lambda(x) \notag \\[1mm]
  & = \, \rho \int_{A'} I_{P_0}(A-x) \dd\lambda(x) 
   \, = \, \rho \int_{\mathbb{R}^d} \int_{\mathbb{R}^d} 
   \mathbf{1}_{A'}(x) \mathbf{1}_{A-x}(y)  
   \dd I_{P_0}(y) \dd\lambda(x) \notag \\[1mm]
   & = \, \rho \int_{\mathbb{R}^d} \int_{\mathbb{R}^d} \mathbf{1}_{A'}(x) 
   \mathbf{1}_{A}(y+x)  \dd\lambda(x) \dd I_{\nts P^{}_{0}}(y)  \\[1mm]
  & = \, \rho \int_{\mathbb{R}^d} \int_{\mathbb{R}^d} \mathbf{1}_{A'}(-x) 
   \mathbf{1}_{A}(y-x)  \dd\lambda(x) \dd I_{\nts P^{}_{0}}(y) 
   \, = \, \rho\, I_{P_0}\big( \mathbf{1}_{A} * 
   \widetilde{\mathbf{1}_{A'}} \big). \notag
\end{align}
Comparing
Eqs.~\eqref{BBG-eq:complcampbell1}--\eqref{BBG-eq:complcampbell2} with
Eq.~\eqref{BBG-eq:mu2andmured} yields the claim.  
\end{proof}

If $\varPhi$ is ergodic, the viewpoint that $P_0$ describes the
configuration relative to a point in the support drawn according to
$\varPhi$ is corroborated by
\[ 
   \frac{1}{\lambda(B_n)} \int_{B_n} 
   g\big( \mathrm{e}^{-\mathrm{i} \varPhi_{\rm ph}(x)}T_{-x}\varPhi \big) 
   \dd \lvert \varPhi \rvert(x) 
   \; \xrightarrow{\, n\to\infty\,} \,
   \int_{\mathcal{M}} g(\varphi)  \dd P_0(\varphi) \quad 
    \text{(a.s.)} 
\]
for any bounded measurable $g\! :\, \mathcal{M} \longrightarrow
\mathbb{R}$.
\medskip

The viewpoint of (possibly complex-valued) ergodic random measures for
diffraction is a useful one since it provides a connection to the
large literature on random measures and on stochastic geometry; see
\cite{BBG-DVJ,BBG-Karr,BBG-Kuel1,BBG-Kuel2,BBG-Kai,BBG-BBM10} and
references therein, as well as \cite{BBG-TBB} for a recent
generalisation that can also be considered from the diffraction point
of view.  However, our approach also shows a limitation that one
encounters when trying to infer properties of a random configuration
of scatterers from its kinematic diffraction: As is evident from
Eq.~\eqref{BBG-eq:thmstochprocautocorr} in
Theorem~\ref{BBG-thm:complrm}, the only `datum' from a random
$\varPhi$ visible in its autocorrelation, and hence also in the
corresponding diffraction, is the second moment measure. It is well
known that second moments are generally insufficient to determine the
distribution of $\varPhi$ unless further structural properties are
known. This inverse problem is known as the homometry problem in
crystallography and the inference problem in the theory of stochastic
processes.

\section{Outlook}

Our exposition provides a snapshot of the present knowledge about
systems with continuous diffraction components; see \cite[Chs.~10 and
11]{BBG-TAO} as well as \cite{BBG-BBM10,BBG-BKM} for additional
examples, and \cite{BBG-BLvE,BBG-BKM} for connections with the
dynamical spectrum. Nevertheless, as is apparent from a comparison
with the pure point diffraction case
\cite{BBG-Crelle,BBG-TAO,BBG-LMpre,BBG-TB,BBG-T}, the status of
general results is lagging behind. Even for many important examples,
some of the most obvious questions are still open from a mathematical
point of view. In particular, this is so for random tiling ensembles
in dimensions $d\ge 2$, or for equilibrium systems just beyond the
complexity of the (planar) Ising model.

Apart from the systems considered here, an interesting class is
provided by random substitution and inflation systems, as introduced
in \cite{BBG-GL}. The randomness present here is compatible with the
long-range order of Meyer sets with entropy
\cite{BBG-BM12,BBG-Moll,BBG-Moll2}, which means that one obtains
interesting mixtures of pure point and absolutely continuous
diffraction measures. Though this direction has not attracted much
attention so far, it is both tractable and practically relevant.

{}From a more general perspective, one lacks some kind of analogue to
the key theorems in pure point diffraction (such as the Poisson
summation formula or the Halmos--von Neumann theorem). While there is
at least the theory of Riesz products \cite{BBG-Z,BBG-Q} for
self-similar systems with singular spectra, a general approach to
stochastic systems is only at its beginning. Methods from point
process theory \cite{BBG-DVJ}, such as the Palm measure and its
connection to the autocorrelation (via its intensity measure), look
promising, but have not produced many concrete results so far. The
latter, however, are needed to make some progress with the complicated
inverse problem for such systems. Though there is substantial
knowledge from the inference approach \cite{BBG-Karr}, it is not clear
at present how this can be used, and how reasonable restrictions could
be included.

\subsection*{Acknowledgment}
It is our pleasure to thank Aernout van Enter, Holger K\"{o}sters,
Daniel Lenz, Robert Moody and Tom Ward for interesting discussions and
for their cooperation on some of the papers that form the basis for
this review.  This work was supported by the German Research Council
(DFG), within the CRC 701.

\end{document}